\def\hascitet{1}
\DeclareMathAlphabet\mathbfcal{OMS}{cmsy}{b}{n}
\DeclareMathAlphabet\mathbfscr{OMS}{mdugm}{b}{n}
\newcommand\ii{\mathrm{i}}
\newcommand\mpi{\uppi}
\newcommand\ee{\mathrm{e}}
\let\oldva\va
\renewcommand\va[1] {\oldva{\boldsymbol #1}}
\let\oldvb\vb
\renewcommand\vb[1] {\oldvb{\boldsymbol #1}}
\let\oldvu\vu
\renewcommand\vu[1] {\oldvu{\boldsymbol #1}}
\newcommand\mat[1]  {\boldsymbol{#1}}
\newcommand\Natural{\mathbbm{N}}
\newcommand\Integer{\mathbbm{Z}}
\newcommand\Real{\mathbbm{R}}
\newcommand\Complex{\mathbbm{C}}
\DeclareDocumentCommand\rect{g}{\IfNoValueTF{#1}{\operatorname{rect}}{\fbraces{}{}{\operatorname{rect}}{#1}}}
\DeclareDocumentCommand\circ{g}{\IfNoValueTF{#1}{\operatorname{circ}}{\fbraces{}{}{\operatorname{circ}}{#1}}}
\DeclareMathOperator\dirac{\delta}
\DeclareDocumentCommand\kdelta{g}{\IfNoValueTF{#1}{\operatorname{\dirac}}{\operatorname{\dirac}_{#1}}}
\newcommand*\transpose{{\mathpalette\@transpose{}}}
\newcommand*\@transpose[2]{\raisebox{\depth}{$\m@th#1\intercal$}}
\DeclarePairedDelimiterX{\definp}[2]{\langle}{\rangle}{#1\,\delimsize\vert\,#2}
\def\inp{\@ifstar{\definp}{\definp*}}
\DeclareDocumentCommand\ensemble{sm}{
	\IfBooleanTF{#1}{\langle#2\rangle}{\left\langle#2\right\rangle}
}
\DeclareDocumentCommand\ensemblew{sm}{
	\IfBooleanTF{#1}{\langle#2\rangle_\omega}{\left\langle#2\right\rangle_\omega}
}
\DeclareDocumentCommand\tavg{sm}{
	\IfBooleanTF{#1}{\langle#2\rangle_\textrm{t}}{\left\langle#2\right\rangle_\textrm{t}}
}
\DeclareDocumentCommand\realization{o m}{\IfNoValueTF{#1}{\tensor*{#2}{}}{\tensor*[^{#1}]{#2}{}}}
\DeclareDocumentCommand\mathoperator{m m m}{
	\def\op{\mathchoice%
		{\displaystyle 		\scalebox{1.15}{$#1$}}%
		{\textstyle 		\scalebox{1.0} {$#1$}}%
		{\scriptstyle 		\scalebox{0.95}{$#1$}}%
		{\scriptscriptstyle	\scalebox{0.9} {$#1$}}%
	}
	\def\sub{\mathchoice%
		{\displaystyle 		#3}%
		{\textstyle 		#3}%
		{\scriptstyle 		\scalebox{.9}{$#3$}}%
		{\scriptscriptstyle	\scalebox{.8}{$#3$}}%
	}
	\tensor*{\op}{%
		_{\IfNoValueTF{#3}{}{\mkern-1mu\sub}}
		^{\IfNoValueTF{#2}{}{#2}}
	}
}
\DeclareDocumentCommand\fourieroperator{m m}{\mathoperator{\mathscr{F}}{#1}{#2}}
\DeclareDocumentCommand\frft{s o o g}{%
	\def\op{\fourieroperator{#2}{#3}}
	\def\body{#4 \vphantom{\fourieroperator{}{}}}
	\IfNoValueTF{#4}{\fourieroperator{#2}{#3}}{
		\IfBooleanTF{#1}
			{\op\lbrace\body\rbrace}
			{\fbraces{\lbrace}{\rbrace}{\op}{\body}}
	}%
}
\newcommand\prettyfrac[2]{\fontfamily{ppl}\selectfont\sfrac{#1}{#2}}
\newcommand\half[0]{\prettyfrac{1}{2}}
\def\blfootnote{\gdef\@thefnmark{}\@footnotetext}
\crefname{subsection}{Subsection}{Subsections}
\Crefname{subsection}{Subsection}{Subsections}
\crefname{Table}{Table}{Tables}
\Crefname{Table}{Table}{Tables}
\newtheorem{theorem}{Theorem}[section]
\newtheorem{corollary}[theorem]{Corollary}
\newtheorem{lemma}[theorem]{Lemma}
\newtheorem{property}[theorem]{Property}
\crefname{task}{Property}{Properties}
\Crefname{task}{Property}{Properties}
\theoremstyle{plain}
\newtheorem*{remark}{Remark}
\DeclareMathOperator\HGop{\vb\Psi}
\DeclareMathOperator\HGdop{\widetilde{\vb\Psi}}
\DeclareDocumentCommand\hg{g}{%
	\tensor*{\Psi}{
		_{\IfNoValueTF{#1}{}{\mkern-1mu#1}}
	}
}
\DeclareDocumentCommand\HG{o g}{%
	\tensor*{\HGop}{
		_{\IfNoValueTF{#2}{}{\mkern-1mu#2}}
		^{\IfNoValueTF{#1}{}{#1}}
	}
}
\DeclareDocumentCommand\HGd{o g}{%
	\HGdop\!\tensor*{\vphantom{\HGop}}{
		_{\IfNoValueTF{#2}{}{\mkern-1mu#2}}
		^{\IfNoValueTF{#1}{}{#1}}
	}
}
\DeclareDocumentCommand\lctoperator{m m}{\mathoperator{\mathscr{L}}{#1}{#2}}
\DeclareDocumentCommand\lct{s o o g}{%
	\def\op{\lctoperator{#2}{#3}}
	\def\body{#4 \vphantom{\lctoperator{}{}}}
	\IfNoValueTF{#4}{\lctoperator{#2}{#3}}{
		\IfBooleanTF{#1}
			{\op\lbrace\body\rbrace}
			{\fbraces{\lbrace}{\rbrace}{\op}{\body}}
	}%
}
\DeclareDocumentCommand\fresneloperator{m m}{\mathoperator{\mathscr{O}}{#1}{#2}}
\DeclareDocumentCommand\fresnelt{s o o g}{%
	\def\op{\fresneloperator{#2}{#3}}
	\def\body{#4 \vphantom{\fresneloperator{}{}}}
	\IfNoValueTF{#4}{\fresneloperator{#2}{#3}}{
		\IfBooleanTF{#1}
			{\op\lbrace\body\rbrace}
			{\fbraces{\lbrace}{\rbrace}{\op}{\body}}
	}%
}
\DeclareDocumentCommand\laplaceoperator{m m}{\mathoperator{\mathcal{B}}{#1}{#2}}
\DeclareDocumentCommand\lt{s o o g}{%
	\def\op{\laplaceoperator{#2}{#3}}
	\def\body{#4 \vphantom{\laplaceoperator{}{}}}
	\IfNoValueTF{#4}{\laplaceoperator{#2}{#3}}{
		\IfBooleanTF{#1}
			{\op\lbrace\body\rbrace}
			{\fbraces{\lbrace}{\rbrace}{\op}{\body}}
	}%
}
\DeclareDocumentCommand\wvdoperator{m m}{\mathoperator{\mathscr{W}}{#1}{#2}}
\DeclareDocumentCommand\wvd{s o o g}{%
	\def\op{\wvdoperator{#2}{#3}}
	\def\body{#4 \vphantom{\wvdoperator{}{}}}
	\IfNoValueTF{#4}{\wvdoperator{#2}{#3}}{
		\IfBooleanTF{#1}
			{\op\lbrace\body\rbrace}
			{\fbraces{\lbrace}{\rbrace}{\op}{\body}}
	}%
}
\newcommand\pDv{{\boldsymbol\partial}}
\author{Shlomi Steinberg}
\author{\"{O}mer E\u{g}ecio\u{g}lu}
\author{Ling-Qi Yan}
\address{University of California, Santa Barbara, \\
         Department of Computer Science}
\email{p@shlomisteinberg.com, \{omer, lingqi\}@cs.ucsb.edu}
\title[The Anisotropic Multivariate Hermite-Gauss Functions]
      {On the Properties of the Anisotropic Multivariate Hermite-Gauss Functions}
\begin{document}



\keywords{Hermite functions, orthogonal basis, Computational optics, Linear Canonical Transform, Fourier Transform, Wigner-Vile Distribution, eigenfunctions}
\subjclass[2010]{Primary 78A10; Secondary 42B10}
\begin{abstract}
	The Hermite-Gauss basis functions have been extensively employed in classical and quantum optics due to their convenient analytic properties.
	A class of multivariate Hermite-Gauss functions, the \emph{anisotropic Hermite-Gauss functions}, arise by endowing the standard univariate Hermite-Gauss functions with a positive definite quadratic form.
	These multivariate functions admit useful applications in optics, signal analysis and probability theory, however they have received little attention in literature.
	In this paper, we examine the properties of these functions, with an emphasis on applications in computational optics.
\end{abstract}

\maketitle


\section{Introduction} \label{section_introduction}

Different forms of the Hermite-Gauss functions have seen wide usage in physics and chemistry,
e.g., in the context of
detection of gravitational waves \cite{Ast_DiPace_Millo_Pichot_Turconi_Christensen_Chaibi_2021,Tao_Green_Fulda_2020},
quantum encoding \cite{Allgaier_Ansari_Donohue_Eigner_Quiring_Ricken_Brecht_Silberhorn_2020} and communication \cite{Perkins_Newell_Schabacker_Richardson_2013},
quantum entanglement with Hermite-Gauss beams \cite{Walborn_Pimentel_2012},
self-healing \cite{AguirreOlivas2015} and non-diffracting \cite{Chabou_Bencheikh_2020} (elegant) Hermite-Gauss beams,
detection beyond the diffraction limit \cite{Singh_Nagar_Roichman_Arie_2017},
Goos–Hänchen shift on reflection of a graphene monolayer \cite{Zhen_Deng_2020},
soft X-ray orbital angular momentum analysis \cite{Lee_Alexander_Kevan_Roy_McMorran_2019},
turbulence-resistant laser beams \cite{Cox_Maqondo_Kara_Milione_Cheng_Forbes_2019},
and for numeric integration \cite{HG_quadrature_6289843}.
This list is far from exhaustive.

The \emph{anisotropic Hermite-Gauss} (AHG) functions have been introduced by \citet{Amari_Kumon_1983}
(using the terminology ``tensorial Hermite-Gauss functions''), and were studied further later by \citet{jastor10.2307.25050684,Holmquist_1996,Ismail_Simeonov_2020}.
By using the quadratic form defined by a given positive definite matrix, these functions form a multivariate extension of the standard univariate Hermite-Gauss (HG) functions.
The positive definite matrix can be used for the representation of spatial deformations, geometric properties and energy tensors of structured optical beams, and potential other future applications.
In the context of optical coherence theory, it was shown that this anisotropy matrix has a clear physical meaning \cite{Steinberg_hg_2021}:
the spatial coherence of light.
This allows for the representation of a large family of coherence functions using a limited count of AHG modes, making the representation computationally-tractable.

\ifx\hascitet\undefined
Ismail and Simeonov \cite{Ismail_Simeonov_2020}
\else
\citet{Ismail_Simeonov_2020}
\fi
have derived certain properties of the AHG functions, including the generating functions, recurrence relations and linearization properties.
The purpose of this paper is to study the properties of these functions from a computational and optical perspective.
In addition to a number of useful identities, we derive closed-form expressions for the linear canonical transform (LCT) of an AHG function,
as well as two important transforms generalized by the LCT: the fractional Fourier transform
and Laplace transform.
In addition, we consider the Wigner-Vile distribution in Hermite-Gauss space.
These transforms are fundamental in Fourier optics, quantum mechanics and signal processing.
We also discuss the eigenfunctions of these transforms and show that the AHG functions are the eigenfunctions for specific cases of the LCT.
These results echo well-known results for the univariate HG functions which have not been previously investigated under the context of the multivariate AHG functions.


\section{Notation and Preliminaries} \label{section_preliminaries}

Let $\Natural=\qty{0,1,2,\ldots}$ represent the set of natural numbers and $\Integer,\Real,\Complex$ represent the set of integers, the real field and
the complex field, respectively.
A vector is denoted as $\va{r}=\qty[r_1,r_2,\ldots,r_n]^\transpose\in\Complex^n$ and the all-ones vector is denoted $\va{1}=\qty[1,1,\ldots,1]^\transpose\in\Complex^n$.
We use $\Real^{n\times m},\Complex^{n\times m}$ to denote the sets of all
real-valued and complex-valued $n\times m$
matrices, respectively.
Let $\mat{I}$ be the identity matrix, $\abs{\mat{A}}$ denote the determinant of a (square) matrix $\mat{A}$ and $\mat{A}^\transpose$ the transpose of $\mat{A}$.
Given $\mat{A}\in\Complex^{n\times m}$, the notation $\mat{A}=[\va{a}_j^\transpose]=[a_{jk}]$ defines $\va{a}_j^\transpose$, $a_{jk}$ to be the row vectors and elements
of $\mat{A}$, respectively.
A matrix $\mat{S}\in\Real^{n\times n}$ is said to be positive definite if it is symmetric and $\va{x}^\transpose\mat{S}\va{x}>0$ for all $0\neq\va{x}\in\Real^n$.
The notation $\mat{S}\succ 0$ indicates that $\mat{S}$ is positive definite.

A multi-index is defined as the $n$-tuple $\vb{\nu}=\qty(\nu_1,\nu_2,\ldots,\nu_n)\in\Natural^n$.
We use the standard multi-index factorial, double factorial, degree and power shorthand, viz.
\begin{alignat}{2}
	\vb{\nu}! &\triangleq \prod_j \nu_j!
	~,
	\qquad&&\qquad
	\vb{\nu}!! \triangleq \prod_j \nu_j!!
	~,
	\\
	\abs{\vb{\nu}} &\triangleq \sum_j \nu_j
	~,
	\qquad&&\qquad
	\va{r}^{\vb{\nu}} \triangleq \prod_j r_j^{\nu_j}
	~,
\end{alignat}
where the double factorial of a natural integer is $n!!=n\cdot(n-2)\cdot\ldots\cdot 1$ when $n$ is odd and $n!!=n\cdot(n-2)\cdot\ldots\cdot 2$ otherwise (the factorial and double factorial of 0 is 1).
The partial order $\preceq$ is defined on the set of multi-indices as follows: $\vb{\nu}\preceq\vb{\mu}$ iff $\forall_j \nu_j\leq\mu_j$.
The usual binomial coefficients are generalized to multi-indices as
\begin{align}
	\binom{\vb{\nu}}{\vb{\mu}} = \frac{\vb{\nu}!}{\vb{\mu}!\qty(\vb{\nu}-\vb{\mu})!}
	~,
\end{align}
the convention being that this binomial coefficient is non-zero iff $\vb{\nu}\preceq\vb{\mu}$.
For a multi-index $\vb{\nu}\in\Natural^n$ and a vector $\va{r}$, we define the partial derivative shorthand as
\begin{align}
	\pDv^{\vb{\nu}}_{\va{r}}
		&\triangleq
			\frac{\partial^{\abs{\vb{\nu}}}}{\prod_j \partial r_j^{\nu_j}}
	~.
\end{align}
Similarly, we define the multi-index matrix, $\mat{\Omega}\in\Natural^{n\times m}$, which consists of $n$ rows, each a multi-index, i.e $\mat{\Omega}=[\vb{\omega}_j]=[\omega_{jk}]$.
We define $\mat{\Omega}!=\prod_{j,k} \omega_{jk}!$ and,
given $\mat{A}=[a_{jk}]\in\Complex^{n\times m}$ set $\mat{A}^{\mat{\Omega}}=\prod_{j,k} a_{jk}^{\omega_{jk}}$.
We sometimes slightly abuse notation and write ${\va{1}}^{\intercal}\mat{\Omega}$ and $\mat{\Omega}\va{1}$ to denote the multi-indices that consist of the column sums and row sums of $\mat{\Omega}$, respectively.

Given a pair of $L^2$ functions $f,g$, the inner product (over $\Real^n$) of $f$ and $g$ is denoted by $\inp{f}{g} \triangleq \int_{\Real^n} \dd{\va{x}} f(\va{x})g^\star(\va{x})$, with $\star$ being complex conjugation.

\paragraph{\textbf{The Hermite-Gauss functions}}
The $k$-th order univariate, complex Hermite-Gauss function is defined as
\begin{align}
	\hg{k}\qty(z)
		&\triangleq
			\qty(\sqrt{\mpi} \, 2^k k!)^{-\half} \ee^{-\frac{z^2}{2}} H_k\qty(z)
		=
			\frac{
				\qty(-1)^k \ee^{\frac{z^2}{2}}
			}{\sqrt{\sqrt{\mpi} \, 2^k k!}}
			\dv[k]{z} \ee^{-z^2}
	~,
	\label{HG_univariate}
\end{align}
where $z\in\Complex$, $k\in\Natural$ and $H_k$ is the Hermite polynomial of order $k$.

Given a symmetric matrix $\mat{\Theta}\in\Complex^{n\times n}$ with a positive definite real part (i.e. $\Re\mat{\Theta}\succ 0$), we define the $n$-dimensional complex anisotropic Hermite-Gauss function of degree $\vb{\nu}\in\Natural^n$ of order $\abs{\vb{\nu}}$ associated to $\mat{\Theta}$ by
\begin{align}
	\HG[\mat{\Theta}]{\vb{\nu}}\qty(\va{r})
		&\triangleq
			\qty(-\frac{1}{\sqrt{2}})^{\abs{\vb{\nu}}}
			\frac{
				\ee^{\frac{1}{2}\va{r}^\transpose\mat{\Theta}^{-1}\va{r}}
			}{
				\sqrt{\vb{\nu}!}
				\qty(\mpi^n\abs{\mat{\Theta}})^{\frac{1}{4}}
			}
			\pDv^{\vb{\nu}}_{\va{r}}
				\ee^{-\va{r}^\transpose\mat{\Theta}^{-1}\va{r}}
	~.
	\label{hermite_gauss}
\end{align}
Similarly,
the \emph{dual} of the anisotropic Hermite-Gauss function is defined as
\begin{align}
	\HGd[\mat{\Theta}]{\vb{\nu}}\qty(\va{r})
		&\triangleq
			\qty(-\frac{1}{\sqrt{2}})^{\abs{\vb{\nu}}}
			\frac{
				\ee^{\frac{1}{2}\va{s}^\transpose\mat{\Theta}\va{s}}
			}{
				\sqrt{\vb{\nu}!}
				\qty(\mpi^n\abs{\mat{\Theta}})^{\frac{1}{4}}
			}
			\pDv^{\vb{\nu}}_{\va{s}}
				\ee^{-\va{s}^\transpose\mat{\Theta}\va{s}}
	~,
	\label{hermite_gauss_dual}
\end{align}
with $\va{s}=\mat{\Theta}^{-1}\va{r}$.
The generating functions of the AHG functions are
\begin{subequations}
\begin{align}
	\sum_{\vb{\nu}\in\Natural^n}
		\sqrt{\frac{2^{\abs{\vb{\nu}}}}{\vb{\nu}!}}
		\va{x}^{\vb{\nu}}
		\HG[\mat{\Theta}]{\vb{\nu}}\qty(\va{r})
	&=
		\frac{
			\ee^{-\frac{1}{2}\va{r}^\transpose\mat{\Theta}^{-1}\va{r} + \va{x}^\transpose\mat{\Theta}^{-1}\qty(2\va{r}-\va{x})}
		}{\qty(\mpi^n\abs{\mat{\Theta}})^{\frac{1}{4}}}
	\label{HG_generating_function}
	~,
	\\
	\sum_{\vb{\nu}\in\Natural^n}
		\sqrt{\frac{2^{\abs{\vb{\nu}}}}{\vb{\nu}!}}
		\va{x}^{\vb{\nu}}
		\HGd[\mat{\Theta}]{\vb{\nu}}\qty(\va{r})
	&=
		\frac{
			\ee^{-\frac{1}{2}\va{r}^\transpose\mat{\Theta}^{-1}\va{r} + \va{x}^\transpose\qty(2\va{r}-\mat{\Theta}\va{x})}
		}{\qty(\mpi^n\abs{\mat{\Theta}})^{\frac{1}{4}}}
	\label{HG_dual_generating_function}
	~,
\end{align}
\end{subequations}
for any $\va{x},\va{r}\in\Complex^n$ (see \cite{jastor10.2307.25050684,Ismail_Simeonov_2020}).


\section{Properties and Identities} \label{section_properties}

We begin with a few simple but useful properties of the AHG functions.
Most of the properties listed in \cref{properties_basic} are known
\cite{Ismail_Simeonov_2020,jastor10.2307.25050684} or easy to prove.
They are included here for completeness.
\begin{property}[Basic properties] \label{properties_basic}
	Let $\va{r}\in\Complex^n$, symmetric $\mat{\Theta}\in\Complex^{n\times n}$ such that $\Re\mat{\Theta}\succ 0$.
    Then
	\begin{tasks}[label-format=,label=\textrm{\thetheorem.\arabic*},label-width=3em,item-indent=5em](2)
		\task*	\label{basic_property_dual}
			$\HGd[\mat{\Theta}]{\vb{\nu}}\qty(\va{r}) =
				\abs{\mat{\Theta}}^{-\half}
				\HG[\mat{\Theta}^{-1}]{\vb{\nu}}\qty(\mat{\Theta}^{-1}\va{r})$.
		\task*	\label{basic_property_conjugate}
			$\HG[\mat{\Theta}]{\vb{\nu}}(\va{r})^\star = \HG[\mat{\Theta}^\star]{\vb{\nu}}(\va{r}^\star)$.
		\task*	\label{basic_property_analytic_b}
				$
					\HG[z^2\mat{\Theta}]{\vb{\nu}}\qty(\va{r}) =
						\abs{\mat{\Theta}}^{\frac{1}{4}}
						\abs{z^2\mat{\Theta}}^{-\frac{1}{4}}
						\qty(\frac{1}{z})^{\abs{\vb{\nu}}}
						\HG[\mat{\Theta}]{\vb{\nu}}\qty(\frac{1}{z}\va{r})
				$ for $0\neq z\in\Complex$.
		\task*	\label{basic_property_analytic_a}
			$\HG[\mat{\Theta}]{\vb{\nu}}\qty(-\va{r}) = (-1)^{\abs{\vb{\nu}}} \ii^n \HG[\mat{\Theta}]{\vb{\nu}}\qty(\va{r})$.
		\task*	\label{basic_property_real}
			if $\mat{\mat{\Theta}},\va{r}$ are real-valued then $\HG[\mat{\Theta}]{\vb{\nu}}\qty(\va{r})$ is real.
		\task*	\label{basic_property_decompose_into_univariate}
			if $\mat{\Theta}=\mat{I}$, the AHG function decomposes into a product of the univariate HG functions:
			$
				\HG[\mat{I}]{\vb{\nu}}(\va{r}) = \HGd[\mat{I}]{\vb{\nu}}(\va{r}) = \prod_k \hg{\nu_k}(r_k)
			$.
		\task*	\label{basic_property_even_odd}
			$\HG[\mat{I}]{\vb{\nu}}(\va{r})$ is even as a function of $r_j$ iff $\nu_j$ is even, otherwise it is odd.
	\end{tasks}
\end{property}
\begin{proof}
	\cref{basic_property_dual,basic_property_decompose_into_univariate} follow trivially from the definitions.
	\cref{basic_property_analytic_b,basic_property_conjugate} follow from the generating function (\cref{HG_generating_function}).
	\cref{basic_property_real} is a consequence of \cref{basic_property_conjugate}.
	\cref{basic_property_analytic_a} is a special case of \cref{basic_property_analytic_b}.
	\cref{basic_property_even_odd} is a consequence of \cref{basic_property_decompose_into_univariate} and the fact that the univariate HG function $\hg{k}$ is even iff $k$ is even and odd otherwise.
\end{proof}

\begin{property}[Derivatives]
	Let $\mat{\Theta}^{-1}=[\va{q}_j]$ be the rows of $\mat{\Theta}^{-1}$.
	Then the partial derivative, gradient, Hessian matrix and Laplacian of the AHG function are given by
	\begin{tasks}[label-format=,label=\textrm{\thetheorem.\arabic*},label-width=3em,item-indent=5em]
		\task \label{derivative_property_partial}
			$
				\pdv{}{r_j} \HG[\mat{\Theta}]{\vb{\nu}}\qty(\va{r})
					 + \va{q}_j^\transpose\va{r} \HG[\mat{\Theta}]{\vb{\nu}}\qty(\va{r})
					= 2 \va{q}_{j}^\transpose\va{\phi}_{\vb{\nu}}
			~,$
		\task \label{derivative_property_vec}
			$
				\pdv{}{\va{r}} \HG[\mat{\Theta}]{\vb{\nu}}\qty(\va{r})
					= \mat{\Theta}^{-1}
						\qty[
							2\va{\phi}_{\vb{\nu}}
							- \va{r}\HG[\mat{\Theta}]{\vb{\nu}}\qty(\va{r})
						]
			~,$
		\task \label{derivative_property_hessian}
			$
            \begin{aligned}[t]
				\pdv[order={2}]{}{\va{r}} \HG[\mat{\Theta}]{\vb{\nu}}\qty(\va{r})
					=&
						-\mat{\Theta}^{-1} \HG[\mat{\Theta}]{\vb{\nu}}\qty(\va{r})
                    \\
					&	+
						2\mat{\Theta}^{-1}
							\qty(
								\mat{\Phi}_{\vb{\nu}}
								- \va{r}\va{\phi}_{\vb{\nu}}^\transpose
								- \va{\phi}_{\vb{\nu}}\va{r}^\transpose
								+ \frac{1}{2}
									\va{r}\va{r}^\transpose
									\HG[\mat{\Theta}]{\vb{\nu}}\qty(\va{r})
							)\mat{\Theta}^{-1}
			~, 
            \end{aligned}
            $
		\task \label{derivative_property_laplacian}
			$
            \begin{aligned}[t]
				\laplacian{\HG[\mat{\Theta}]{\vb{\nu}}\qty(\va{r})}
					=&
						-
						\HG[\mat{\Theta}]{\vb{\nu}}\qty(\va{r})
						\tr\mat{\Theta}^{-1}
						+
						2\tr\qty(
							\mat{\Theta}^{-2}
							\mat{\Phi}_{\vb{\nu}}
						)
                    \\
					&	+
						\qty(\mat{\Theta}^{-1}\va{r})^\transpose
						\qty[
							\HG[\mat{\Theta}]{\vb{\nu}}\qty(\va{r})
							\mat{\Theta}^{-1}\va{r}
							-
							4 \mat{\Theta}^{-1}\va{\phi}_{\vb{\nu}}
						]
			~,
            \end{aligned}
            $
	\end{tasks}
	where 
	$\laplacian=\sum_j\pdv[order={2}]{}{r_j}$ is the Laplace operator (taken with respect to $\va{r}$),
$\pdv[order={2}]{}{\va{r}}$ is the Hessian, the matrix $\mat{\Phi}_{\vb{\nu}}$ is given by
\cref{derivate_eq_Phi} and with
	\begin{align}
		\va{\phi}_{\vb{\nu}}
			&=
				\frac{1}{\sqrt{2}}
				\begin{bmatrix}
					\sqrt{\nu_1} \HG[\mat{\Theta}]{\vb{\nu}-\vb{\varepsilon}_1}\qty(\va{r}), &
					\sqrt{\nu_2} \HG[\mat{\Theta}]{\vb{\nu}-\vb{\varepsilon}_2}\qty(\va{r}), &
					\ldots, &
					\sqrt{\nu_n} \HG[\mat{\Theta}]{\vb{\nu}-\vb{\varepsilon}_n}\qty(\va{r})
				\end{bmatrix}^\transpose
		~,
	\end{align}
	where $\vb{\varepsilon}_k\in\Natural^n$ is such that $(\varepsilon_k)_j=\kdelta{jk}$, i.e. the multi-index with $1$ at position $k$ and 0 elsewhere.
	\begin{remark}
        We adopt the convention that the AHG function vanishes identically if its degree contains negative elements.
	\end{remark}
	\begin{remark}
		\cref{derivative_property_vec,derivative_property_partial} were first derived by \citet{jastor10.2307.25050684}.
		A proof is provided below for completeness.
	\end{remark}
\end{property}
\begin{proof}
Differentiate the generating function (\cref{HG_generating_function}):
	\begin{align}
		\sum_{\vb{\nu}\in\Natural^n}
			\sqrt{\frac{2^{\abs{\vb{\nu}}}}{\vb{\nu}!}}
			\va{x}^{\vb{\nu}}
			\pdv{}{r_j} \HG[\mat{\Theta}]{\vb{\nu}}\qty(\va{r})
		&=
			\va{q}_{j}^\transpose\qty(2\va{x} - \va{r})
			\frac{
				\ee^{-\frac{1}{2}\va{r}^\transpose\mat{\Theta}^{-1}\va{r} + \va{x}^\transpose\mat{\Theta}^{-1}\qty(2\va{r}-\va{x})}
			}{\qty(\mpi^n\abs{\mat{\Theta}})^{\frac{1}{4}}}
		\\
		&=
		\va{q}_{j}^\transpose\qty(2\va{x} - \va{r})
		\sum_{\vb{\nu}\in\Natural^n}
			\sqrt{\frac{2^{\abs{\vb{\nu}}}}{\vb{\nu}!}}
			\va{x}^{\vb{\nu}}
			\HG[\mat{\Theta}]{\vb{\nu}}\qty(\va{r})
	\end{align}
	and equate the powers of $\va{x}$ on both sides, proving \cref{derivative_property_partial}.
	\cref{derivative_property_vec} follows immediately from \cref{derivative_property_partial}.

	Differentiate \cref{derivative_property_vec}:
	\begin{align}
		\pdv[order={2}]{}{\va{r}} \HG[\mat{\Theta}]{\vb{\nu}}\qty(\va{r})
			&=
				\mat{\Theta}^{-1}
					\pdv{}{\va{r}}\qty[
						2\va{\phi}_{\vb{\nu}}
						- \va{r}\HG[\mat{\Theta}]{\vb{\nu}}\qty(\va{r})
					]
		\\
			&=
				\mat{\Theta}^{-1}
					\qty[
						2\mat{\Phi}_{\vb{\nu}}
						- \HG[\mat{\Theta}]{\vb{\nu}}\qty(\va{r})
							\qty[\mat{I} - \va{r}\qty(\mat{\Theta}^{-1}\va{r})^\transpose]
						- 2\va{r}\qty(\mat{\Theta}^{-1}\va{\phi}_{\vb{\nu}})^\transpose
					],
	\end{align}
	where $\mat{\Phi}_{\vb{\nu}}=\pdv{}{\va{r}}\va{\phi}_{\vb{\nu}}$ is the matrix with the following elements:
	\begin{align}
		\qty[\mat{\Phi}_{\vb{\nu}}]_{jk}
			&\triangleq
				\begin{cases}
					\sqrt{\nu_j\qty(\nu_j-1)} \HG[\mat{\Theta}]{\vb{\nu}-2\vb{\varepsilon}_j}\qty(\va{r})			& \qif j=k	  \\
					\sqrt{\nu_j\nu_k} \HG[\mat{\Theta}]{\vb{\nu}-\vb{\varepsilon}_j-\vb{\varepsilon}_k}\qty(\va{r}) & \qotherwise
				\end{cases}
		\label{derivate_eq_Phi}
	\end{align}
	and simplify, yielding \cref{derivative_property_hessian}.

To complete the proof,
	note that  $\laplacian\equiv\tr\pdv[order={2}]{}{\va{r}}$ and recall
that the trace of an outer product is the inner product. This gives \cref{derivative_property_laplacian}.
\end{proof}

\begin{lemma}[Orthogonality and completeness] \label{lemma_orthogonality}
	Given a symmetric matrix $\mat{\mat{\Theta}}$ with a positive definite real part, the anisotropic Hermite-Gauss functions $\HG[\mat{\Theta}]{\vb{\nu}}$ form a complete orthonormal (with respect to their dual) basis of $\Real^n\to\Complex$ $L^2$-functions.
	In other words,
	\begin{enumerate}
		\item
			For all $\vb{\nu},\vb{\mu}\in\Natural^n$, $\inp{\HG[\mat{\Theta}]{\vb{\nu}}}{\HGd[\mat{\Theta}]{\vb{\mu}}} = \kdelta{\vb{\nu}\vb{\mu}}$,
			where $\kdelta$ denotes the Kronecker delta; and
		\item
			If an $L^2$-function $f$ is orthogonal to all $\HG[\mat{\Theta}]{\vb{\nu}}$, then $f$ vanishes a.e.
	\end{enumerate}
\end{lemma}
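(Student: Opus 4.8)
The plan is to establish the two assertions by separate means: the biorthogonality relation through the pair of generating functions, and the completeness through a moment argument for the Fourier transform.

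For the biorthogonality, write $c_{\vb\nu}=\sqrt{2^{\abs{\vb\nu}}/\vb\nu!}$ and assemble the two families into the Gaussian kernels $G(\va x,\va r)=\sum_{\vb\nu}c_{\vb\nu}\va x^{\vb\nu}\HG[\mat\Theta]{\vb\nu}(\va r)$ and $\widetilde G(\va y,\va r)=\sum_{\vb\nu}c_{\vb\nu}\va y^{\vb\nu}\HGd[\mat\Theta]{\vb\nu}(\va r)$, whose closed forms are the right-hand sides of \cref{HG_generating_function,HG_dual_generating_function}. I would compute the pairing $\int_{\Real^n}G(\va x,\va r)\widetilde G(\va y,\va r)\dd{\va r}$ in closed form and then read off the individual pairings $\inp{\HG[\mat\Theta]{\vb\nu}}{\HGd[\mat\Theta]{\vb\mu}}$ by matching the coefficients of $\va x^{\vb\nu}\va y^{\vb\mu}$. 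Substituting the two closed forms, the integrand is a single Gaussian in $\va r$ with quadratic part $\mat\Theta^{-1}$ and linear part $2\qty(\mat\Theta^{-1}\va x+\va y)^\transpose\va r$; completing the square and using $\int_{\Real^n}\ee^{-\va r^\transpose\mat M\va r+\va b^\transpose\va r}\dd{\va r}=\mpi^{n/2}\abs{\mat M}^{-1/2}\ee^{\frac14\va b^\transpose\mat M^{-1}\va b}$ with $\mat M=\mat\Theta^{-1}$, the two prefactors $\qty(\mpi^n\abs{\mat\Theta})^{-1/4}$ combine with $\abs{\mat\Theta^{-1}}^{-1/2}=\abs{\mat\Theta}^{1/2}$ to cancel and the exponent collapses to exactly $2\va x^\transpose\va y$. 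Hence $\int_{\Real^n}G\widetilde G\dd{\va r}=\ee^{2\va x^\transpose\va y}=\sum_{\vb\nu}c_{\vb\nu}^2\va x^{\vb\nu}\va y^{\vb\nu}$, and matching coefficients gives $\inp{\HG[\mat\Theta]{\vb\nu}}{\HGd[\mat\Theta]{\vb\mu}}=\kdelta{\vb\nu\vb\mu}$.

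Two points require care. First, the Gaussian formula must be valid for the complex symmetric $\mat M=\mat\Theta^{-1}$: this holds because $\Re\mat\Theta\succ0$ forces $\Re\mat\Theta^{-1}\succ0$ — writing $\mat\Theta=\mat R+\ii\mat J$ with $\mat R\succ0$ one finds $\Re\mat\Theta^{-1}=\mat R^{-1/2}\qty(\mat I+\mat K^2)^{-1}\mat R^{-1/2}$ with $\mat K=\mat R^{-1/2}\mat J\mat R^{-1/2}$ — so the integral converges absolutely and the formula follows by analytic continuation from the real case, with the principal branch of $\abs{\mat M}^{-1/2}$ fixed by continuity from the real positive-definite case. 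Second, to pass from the closed form to the coefficientwise identity I would note that $\va x\mapsto G(\va x,\cdot)$ is an $L^2(\Real^n)$-valued analytic map with Taylor coefficients $c_{\vb\nu}\HG[\mat\Theta]{\vb\nu}$ (the kernel has finite $L^2$ norm), and likewise for $\widetilde G$; continuity of the pairing then justifies expanding both sides as power series in $\va x,\va y$ and equating coefficients.

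For the completeness, suppose $f\in L^2$ satisfies $\inp{f}{\HG[\mat\Theta]{\vb\nu}}=0$ for every $\vb\nu$. By \cref{basic_property_conjugate}, $\HG[\mat\Theta]{\vb\nu}(\va r)^\star=\HG[\mat\Theta^\star]{\vb\nu}(\va r)$ for real $\va r$, a polynomial of degree $\vb\nu$ times $\ee^{-\frac12\va r^\transpose\qty(\mat\Theta^\star)^{-1}\va r}$; since these are triangular in the multi-degree, the hypothesis is equivalent to the vanishing of all moments $\int_{\Real^n}g(\va r)\va r^{\vb\nu}\dd{\va r}=0$ of $g(\va r)=f(\va r)\ee^{-\frac12\va r^\transpose\qty(\mat\Theta^\star)^{-1}\va r}$. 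Because $\abs{g(\va r)}\le\abs{f(\va r)}\ee^{-\frac12\va r^\transpose\Re\mat\Theta^{-1}\va r}$ and $f\in L^2$, Cauchy–Schwarz gives $\int_{\Real^n}\abs{g(\va r)}\ee^{R\abs{\va r}}\dd{\va r}<\infty$ for every $R>0$, so $g\in L^1$ and its Fourier transform extends to an entire function on $\Complex^n$ whose Taylor coefficients at the origin are precisely the above moments. All of them vanish, so $\widehat g\equiv0$, whence $g=0$ a.e.; as the Gaussian factor never vanishes, $f=0$ a.e.

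The main obstacle is the bookkeeping in the biorthogonality step: carrying the complex symmetric $\mat\Theta$ through the Gaussian integral — in particular fixing the branch of $\abs{\mat\Theta^{-1}}^{-1/2}$ consistently with the $\qty(\mpi^n\abs{\mat\Theta})^{-1/4}$ prefactors so that all determinant factors cancel — together with the rigorous justification of interchanging the infinite summation with the integration. Both the positivity $\Re\mat\Theta^{-1}\succ0$ and the finiteness of the kernels' $L^2$ norms are what make these steps legitimate; once they are in place the algebraic collapse of the exponent to $2\va x^\transpose\va y$ is a routine completion of the square.
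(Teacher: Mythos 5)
The paper itself offers no argument for this lemma --- its ``proof'' is a citation to Ismail and Simeonov --- so your self-contained derivation is by construction a different route, and in substance it is sound. For part (1) your Gaussian computation is correct: the product of the two generating-function kernels (\cref{HG_generating_function,HG_dual_generating_function}) has quadratic part $-\va{r}^\transpose\mat{\Theta}^{-1}\va{r}$ and linear part $2\qty(\mat{\Theta}^{-1}\va{x}+\va{y})^\transpose\va{r}$, the complex Gaussian formula applies because $\Re\mat{\Theta}\succ0$ does imply $\Re\mat{\Theta}^{-1}\succ0$ (your factorization is right), the determinant prefactors cancel, and the exponent collapses to $2\va{x}^\transpose\va{y}$; the $L^2$-valued analyticity argument legitimately justifies equating coefficients. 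For part (2), the reduction to vanishing moments is valid (the leading homogeneous part of the polynomial factor of $\HG[\mat{\Theta}^\star]{\vb{\nu}}$ is a nonzero multiple of $\qty(\qty(\mat{\Theta}^\star)^{-1}\va{r})^{\vb{\nu}}$, and these span each homogeneous degree since $\mat{\Theta}$ is invertible, so induction on total degree gives the equivalence your word ``triangular'' gestures at), and the entire-Fourier-transform argument is the classical completeness proof for Hermite-type systems.

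The one point you must repair is the pairing in part (1). You compute the bilinear integral $\int_{\Real^n}G(\va{x},\va{r})\,\widetilde{G}(\va{y},\va{r})\dd{\va{r}}$ with no conjugation, but you state the conclusion as $\inp{\HG[\mat{\Theta}]{\vb{\nu}}}{\HGd[\mat{\Theta}]{\vb{\mu}}}=\kdelta{\vb{\nu}\vb{\mu}}$, and the paper's $\inp{\cdot}{\cdot}$ conjugates its second argument. The two pairings coincide only when $\mat{\Theta}$ is real, in which case $\HGd[\mat{\Theta}]{\vb{\mu}}$ is real-valued (cf. \cref{basic_property_dual,basic_property_real}). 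For genuinely complex $\mat{\Theta}$ the conjugated statement is in fact false: with $n=1$, $\mat{\Theta}=\ee^{\ii\phi}$, $\abs{\phi}<\frac{\mpi}{2}$, and $\vb{\nu}=\vb{\mu}=0$, one finds $\inp{\HG[\mat{\Theta}]{0}}{\HGd[\mat{\Theta}]{0}}=\qty(\cos\phi)^{-\half}\neq1$, whereas the unconjugated integral equals $1$. So what your computation actually proves is the bilinear biorthogonality $\int_{\Real^n}\HG[\mat{\Theta}]{\vb{\nu}}\qty(\va{r})\HGd[\mat{\Theta}]{\vb{\mu}}\qty(\va{r})\dd{\va{r}}=\kdelta{\vb{\nu}\vb{\mu}}$, which is the version that is true for complex $\mat{\Theta}$ (and the one established in the cited reference); the lemma as literally written holds only under that bilinear reading, or verbatim for real $\mat{\Theta}$. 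Your part (2) is unaffected --- there you correctly carry the conjugation through \cref{basic_property_conjugate}. Make the choice of pairing explicit, and the proof is complete.
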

\begin{proof}
	See 
    \ifx\hascitet\undefined
    Ismail and Simeonov \cite{Ismail_Simeonov_2020}.
    \else
    \citet{Ismail_Simeonov_2020}.
    \fi
\end{proof}

Our main contributions in this section start with the next lemma, which allows for the
expansion of an AHG function as a finite series of AHG functions with different anisotropy.
\begin{lemma}[Anisotropy transformation] \label{lemma_HG_anisotropy_transform_identity}
	Given symmetric $\mat{\Theta}_1,\mat{\Theta}_2\in\Complex^{n\times n}$, with
$\Re\mat{\Theta}_1,\Re\mat{\Theta}_2\succ 0$, we have
	\begin{align}
		\HG[\mat{\Theta}_1]{\vb{\nu}}\qty(\va{r})
			&=
		\sqrt{\vb{\nu}! \abs{\mat{T}}}
		\sum_{\substack{
				\mat{\Omega}=\qty[\vb{\omega}_{j}^\transpose]\in\Natural^{n\times n}
				\text{, s.t. }
				\va{1}^\transpose\mat{\Omega}=\vb{\nu}
				\\
				\mathclap{
					\text{with }
					\vb{\mu} = \qty(\abs{\vb{\omega}_{1}},\abs{\vb{\omega}_{2}},\ldots,\abs{\vb{\omega}_{n}})
				}
			}}
			\frac{\mat{T}^{\mat{\Omega}}}
				 {\mat{\Omega}!}
			\sqrt{\vb{\mu}!}
			\HG[\mat{\Theta}_2]{\vb{\mu}}\qty(\mat{T}\va{r})
		\label{transformation_HG_identity}
	~,
	\end{align}
	where $\mat{T}=\mat{\Theta}_2^{\half}\mat{\Theta}_1^{-\half}$.
	The summation is over all $n\times n$ multi-index matrices $\mat{\Omega}$, with rows $\vb{\omega}_j$, such that the sum of the $k$-th column of $\mat{\Omega}$ is $\nu_k$.
	The multi-index $\vb{\mu}\in\Natural^n$ is defined to be the row sums of $\mat{\Omega}$.\\
\begin{remark}
	There are $\prod_j p(\nu_j)$ such matrices, where $p(m)$ is the partition function, which asymptotically grows as $\order{\exp(\sqrt{\abs{\vb{\nu}}})}$.
\end{remark}
\end{lemma}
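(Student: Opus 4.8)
The plan is to obtain \cref{transformation_HG_identity} directly from the generating function \cref{HG_generating_function}, by relating the generating function at anisotropy $\mat{\Theta}_1$ to the one at anisotropy $\mat{\Theta}_2$ through the linear substitution induced by $\mat{T}=\mat{\Theta}_2^{\half}\mat{\Theta}_1^{-\half}$, and then extracting the coefficient of a common monomial $\va{x}^{\vb{\nu}}$ in the auxiliary variable. The first step is to record the key algebraic property of $\mat{T}$. Since $\mat{\Theta}_1,\mat{\Theta}_2$ are symmetric with spectra off the negative real axis (their real parts being positive definite, the eigenvalues have positive real part), the principal square roots $\mat{\Theta}_i^{\half}$ are well-defined matrix functions of $\mat{\Theta}_i$; hence they are themselves symmetric and commute with $\mat{\Theta}_i^{-1}$. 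Consequently $\mat{T}^\transpose=\mat{\Theta}_1^{-\half}\mat{\Theta}_2^{\half}$, and a short computation gives $\mat{T}^\transpose\mat{\Theta}_2^{-1}\mat{T}=\mat{\Theta}_1^{-1}$.

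Next I would substitute $\va{r}\mapsto\mat{T}\va{r}$ and $\va{x}\mapsto\mat{T}\va{x}$ into the right-hand side of \cref{HG_generating_function} at anisotropy $\mat{\Theta}_2$ and compare its exponent with that of \cref{HG_generating_function} at anisotropy $\mat{\Theta}_1$. Using $\mat{T}^\transpose\mat{\Theta}_2^{-1}\mat{T}=\mat{\Theta}_1^{-1}$, each of the three quadratic pieces (the $\va{r}$-term, the $\va{x}$--$\va{r}$ cross term, and the $\va{x}$-term) matches identically, while the two normalization prefactors differ only by the determinant ratio $\qty(\abs{\mat{\Theta}_2}/\abs{\mat{\Theta}_1})^{\frac14}=\abs{\mat{T}}^{\half}$. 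This produces the generating-function transformation law
\begin{align}
	\sum_{\vb{\nu}\in\Natural^n}
		\sqrt{\frac{2^{\abs{\vb{\nu}}}}{\vb{\nu}!}}
		\va{x}^{\vb{\nu}}
		\HG[\mat{\Theta}_1]{\vb{\nu}}\qty(\va{r})
	=
		\abs{\mat{T}}^{\half}
		\sum_{\vb{\mu}\in\Natural^n}
			\sqrt{\frac{2^{\abs{\vb{\mu}}}}{\vb{\mu}!}}
			\qty(\mat{T}\va{x})^{\vb{\mu}}
			\HG[\mat{\Theta}_2]{\vb{\mu}}\qty(\mat{T}\va{r})
	~.
\end{align}

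To finish, I would expand $\qty(\mat{T}\va{x})^{\vb{\mu}}=\prod_j\qty(\sum_k T_{jk}x_k)^{\mu_j}$ by the multinomial theorem; collecting the exponents into a multi-index matrix $\mat{\Omega}=[\vb{\omega}_j^\transpose]$ with row sums $\abs{\vb{\omega}_j}=\mu_j$ rewrites this as $\sum_{\mat{\Omega}}\tfrac{\vb{\mu}!}{\mat{\Omega}!}\mat{T}^{\mat{\Omega}}\va{x}^{\va{1}^\transpose\mat{\Omega}}$, where $\va{1}^\transpose\mat{\Omega}$ is the vector of column sums. Substituting this into the law above and matching the coefficient of $\va{x}^{\vb{\nu}}$ restricts the double sum to matrices with column sums $\va{1}^\transpose\mat{\Omega}=\vb{\nu}$, with $\vb{\mu}$ then forced to be the row sums. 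Because $\va{1}^\transpose\mat{\Omega}=\vb{\nu}$ forces $\abs{\vb{\mu}}=\abs{\vb{\nu}}$, the powers of two cancel; clearing the remaining $\sqrt{\vb{\nu}!}$ and $\sqrt{\vb{\mu}!}$ factors (using $\vb{\mu}!/\sqrt{\vb{\mu}!}=\sqrt{\vb{\mu}!}$ and $\sqrt{\vb{\nu}!}\,\abs{\mat{T}}^{\half}=\sqrt{\vb{\nu}!\abs{\mat{T}}}$) yields exactly \cref{transformation_HG_identity}.

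I expect the only genuine subtleties to be twofold. First, the symmetry and commutation of the matrix square roots, on which $\mat{T}^\transpose\mat{\Theta}_2^{-1}\mat{T}=\mat{\Theta}_1^{-1}$ rests, must be justified via the functional calculus and the location of the spectra of $\mat{\Theta}_1,\mat{\Theta}_2$. Second, the combinatorial bookkeeping that reorganizes a product of $n$ multinomials into a single sum over the matrices $\mat{\Omega}$ is the most error-prone step, though it is routine once the row-sum/column-sum correspondence is fixed. Coefficient extraction itself is legitimate since each side of the transformation law is entire in $\va{x}$ (its exponent is a quadratic polynomial in $\va{x}$), so the two Taylor expansions may be compared term by term.
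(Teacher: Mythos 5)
Your proposal is correct and follows essentially the same route as the paper's proof: transform the generating function \eqref{HG_generating_function} under the substitution $\va{x}\mapsto\mat{T}\va{x}$, $\va{r}\mapsto\mat{T}\va{r}$, expand the resulting monomials $(\mat{T}\va{x})^{\vb{\mu}}$ by the multinomial theorem into a sum over multi-index matrices $\mat{\Omega}$, and equate coefficients of $\va{x}^{\vb{\nu}}$. The only difference is that you make explicit two points the paper leaves implicit — the identity $\mat{T}^\transpose\mat{\Theta}_2^{-1}\mat{T}=\mat{\Theta}_1^{-1}$ (via symmetry and commutation of the principal square roots) and the legitimacy of coefficient extraction — which is a welcome, but not substantively different, refinement.
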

\begin{proof}
	Start with the AHG generating function, \cref{HG_generating_function}, and perform the variable changes $\va{y}=\mat{T}\va{x}$ and $\va{s}^\prime=\mat{T}\va{r}$, viz.
	\begin{align}
		\sum_{\vb{\nu}\in\Natural^n}
				\sqrt{\frac{2^{\abs{\vb{\nu}}}}{\vb{\nu}!}}
				\va{x}^{\vb{\nu}}
				\HG[\mat{\Theta}_1]{\vb{\nu}}\qty(\va{r})
			&=
				\frac{
					\ee^{-\frac{1}{2}\va{s}^\transpose\mat{\Theta}_2^{-1}\va{s} + \va{y}^\transpose\mat{\Theta}_2^{-1}\qty(2\va{s}-\va{y})}
				}{\qty(\mpi^n\abs{\mat{\Theta}_1})^{\frac{1}{4}}}
            \\
			&=
                \abs{\mat{T}}^{\half}
                \sum_{\vb{\nu}\in\Natural^n}
                        \sqrt{\frac{2^{\abs{\vb{\nu}}}}{\vb{\nu}!}}
                        \va{y}^{\vb{\nu}}
                        \HG[\mat{\Theta}_2]{\vb{\nu}}\qty(\va{s})
	~.
	\end{align}
	Then, by the multinomial theorem:
	\begin{align}
		y_k^{\nu_k} &=
			\sum_{\substack{\;\;\vb{\omega}\in\Natural^n\\\abs{\vb{\omega}}=\nu_k}}
				\frac{\nu_k!}{\vb{\omega}!}
				\va{t}_k^{\mkern2mu \vb{\omega}}
				\va{x}^{\mkern1mu \vb{\omega}}
	~,
	\end{align}
	where the summation is over all the integer partitions of $\nu_k$ and we denote $\mat{T}=\qty[\va{t}_{j}^\transpose]$, i.e. $\va{t}_j$ are the rows of $\mat{T}$.
	The two equations above yield
	\begin{align}
		\sum_{\vb{\nu}\in\Natural^n}
				\sqrt{\frac{2^{\abs{\vb{\nu}}}}{\vb{\nu}!}}
				\va{x}^{\vb{\nu}}
				\HG[\mat{\Theta}_1]{\vb{\nu}}\qty(\va{r})
			&=
		\abs{\mat{T}}^{\half}
		\sum_{\substack{
				\mat{\Omega}=\qty[\vb{\omega}_{j}^\transpose]\in\Natural^{n\times n},
				\\
				\mathclap{
					\text{with }
					\vb{\mu} = \qty(\abs{\vb{\omega}_{1}},\abs{\vb{\omega}_{2}},\ldots,\abs{\vb{\omega}_{n}})
				}
			}}
			\sqrt{2^{\abs{\vb{\mu}}}\vb{\mu}!}
			\HG[\mat{\Theta}_2]{\vb{\mu}}\qty(\va{s})
			\prod_k
				\frac{
					\va{t}_k^{\mkern2mu \vb{\omega}_{k}}
					\va{x}^{\mkern1mu \vb{\omega}_{k}}
				}{\vb{\omega}_{k}!}
		~.
	\end{align}
	Equating the powers of $\va{x}$ on both sides above gives \cref{transformation_HG_identity}.
\end{proof}

Immediate consequences of the above lemma are the next few corollaries.
The first corollary facilitates the dimensional decomposition of an arbitrary AHG function into (finite) univariate HG functions.
This has useful computational applications.
\begin{corollary}[Dimensional decomposition] \label{corollary_dimensional_decomposition}
	With $\va{s}=\mat{\Theta}^{-\half}\va{r}$,
	\begin{align}
		\HG[\mat{\Theta}]{\vb{\nu}}\qty(\va{r})
			&=
		\sqrt{\vb{\nu}!}
		\abs{\mat{\Theta}}^{-\frac{1}{4}}
		\sum_{\substack{
				\mat{\Omega}=\qty[\vb{\omega}_{j}^\transpose]\in\Natural^{n\times n}
				\text{, s.t. }
				\va{1}^\transpose\mat{\Omega}=\vb{\nu}
				\\
				\mathclap{
					\text{with }
					\vb{\mu} = \qty(\abs{\vb{\omega}_{1}},\abs{\vb{\omega}_{2}},\ldots,\abs{\vb{\omega}_{n}})
				}
			}}
			\frac{\qty(\mat{\Theta}^{-\half})^{\mat{\Omega}}}
				 {\mat{\Omega}!}
			\sqrt{\vb{\mu}!}
			\prod_k
				\hg{\mu_k}\qty(s_k)
		.
	\end{align}
\end{corollary}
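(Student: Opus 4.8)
The plan is to obtain this corollary as the special case $\mat{\Theta}_2 = \mat{I}$ of the anisotropy transformation in \cref{lemma_HG_anisotropy_transform_identity}, taking $\mat{\Theta}_1 = \mat{\Theta}$. The identity matrix trivially satisfies $\Re\mat{I}\succ 0$, so the hypotheses of the lemma are met. With this choice the transformation matrix reduces to $\mat{T} = \mat{I}^{\half}\mat{\Theta}^{-\half} = \mat{\Theta}^{-\half}$, whence $\mat{T}^{\mat{\Omega}} = \qty(\mat{\Theta}^{-\half})^{\mat{\Omega}}$ and $\abs{\mat{T}} = \abs{\mat{\Theta}^{-\half}} = \abs{\mat{\Theta}}^{-\half}$, so that the scalar prefactor $\sqrt{\vb{\nu}!\abs{\mat{T}}}$ of \cref{transformation_HG_identity} splits as $\sqrt{\vb{\nu}!}\,\abs{\mat{\Theta}}^{-\frac{1}{4}}$. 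Substituting these directly into \cref{transformation_HG_identity} already reproduces every factor of the claimed right-hand side except for the target functions $\HG[\mat{I}]{\vb{\mu}}(\mat{T}\va{r})$.

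It then remains only to rewrite those isotropic AHG functions. Since $\mat{T}\va{r} = \mat{\Theta}^{-\half}\va{r} = \va{s}$, each summand contains $\HG[\mat{I}]{\vb{\mu}}(\va{s})$, and \cref{basic_property_decompose_into_univariate} furnishes the factorization $\HG[\mat{I}]{\vb{\mu}}(\va{s}) = \prod_k \hg{\mu_k}(s_k)$ into univariate Hermite--Gauss functions. Inserting this product into each term and collecting the common prefactor $\sqrt{\vb{\nu}!}\,\abs{\mat{\Theta}}^{-\frac{1}{4}}$ yields the stated expression verbatim.

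There is essentially no analytic obstacle here: the corollary is a direct specialization of the preceding lemma, and the only care required is bookkeeping. Concretely, one must confirm that the single matrix square root $\mat{\Theta}^{-\half}$ is used consistently throughout — both in forming $\mat{T}$ (hence in the weights $\mat{T}^{\mat{\Omega}}$) and in defining the substitution $\va{s} = \mat{\Theta}^{-\half}\va{r}$ — so that the monomial factors $\qty(\mat{\Theta}^{-\half})^{\mat{\Omega}}$ and the arguments $s_k$ of the univariate factors refer to the same branch. Once this consistency is fixed, the identity follows immediately.
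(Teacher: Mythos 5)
Your proof is correct and is precisely the route the paper intends: the corollary is presented as an immediate specialization of \cref{lemma_HG_anisotropy_transform_identity} with $\mat{\Theta}_2=\mat{I}$, combined with \cref{basic_property_decompose_into_univariate} to factor $\HG[\mat{I}]{\vb{\mu}}(\va{s})$ into univariate Hermite--Gauss functions. The bookkeeping you verify (that $\mat{T}=\mat{\Theta}^{-\half}$, $\abs{\mat{T}}^{\half}=\abs{\mat{\Theta}}^{-\frac{1}{4}}$, and that the same branch of $\mat{\Theta}^{-\half}$ appears in both $\mat{T}^{\mat{\Omega}}$ and $\va{s}$) is exactly what makes the specialization immediate.
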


It is often important to evaluate the AHG functions at 0, e.g., for computation of the peak energy of optical beams or the determination of
the total energy carried by a wave ensemble \cite{Steinberg_hg_2021}.
The next corollary provides an explicit expression for the values at $0$ and may admit interesting combinatorics.
\begin{corollary}[The AHG function at 0] \label{corollary_hg_at_zero}
	Applying \cref{corollary_dimensional_decomposition} and recalling the values of the Hermite polynomials at 0, viz. $H_k(0)=(-2)^{\frac{k}{2}}(k-1)!!$ when $k$ is even and $\hg{k}(0)=0$ when $k$ is odd, results in
	\begin{align}
		\HG[\mat{\Theta}]{\vb{\nu}}\qty(0)
			&=
		\frac{\sqrt{\vb{\nu}!}}
			 {\qty(\mpi^n\abs{\mat{\Theta}})^{\frac{1}{4}}}
		\sum_{\substack{
				\mat{\Omega}=\qty[\vb{\omega}_{j}^\transpose]\in\Natural^{n\times n}
				\text{, s.t. }
				\va{1}^\transpose\mat{\Omega}=\vb{\nu}
				\\
				\mathclap{
					\text{and }
					\vb{\mu} = \qty(\abs{\vb{\omega}_{1}},\abs{\vb{\omega}_{2}},\ldots,\abs{\vb{\omega}_{n}})\in(2\Natural)^n
				}
			}}
			\frac{\qty(\mat{\Theta}^{-\half})^{\mat{\Omega}}}
				 {\mat{\Omega}!}
			\ii^{\abs{\vb{\mu}}}
			\qty(\vb{\mu}-\vb{1})!!
		,
	\end{align}
	with $\vb{1}=\qty(1,1,\ldots,1)\in\Natural^n$.
	\begin{remark}
		Note that the summation is now also constrained to multi-index matrices with even row sums.
		The double factorial of $-1$ is defined to be 1.
	\end{remark}
\end{corollary}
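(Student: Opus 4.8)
The plan is to specialize the dimensional decomposition of \cref{corollary_dimensional_decomposition} to the origin and then feed in the known values of the univariate Hermite--Gauss functions at $0$. First I would set $\va{r}=0$, whence $\va{s}=\mat{\Theta}^{-\half}\va{r}=0$ and every univariate factor collapses to $\hg{\mu_k}(0)$. The decomposition then reads
\[
	\HG[\mat{\Theta}]{\vb{\nu}}(0)
		= \sqrt{\vb{\nu}!}\,\abs{\mat{\Theta}}^{-\frac{1}{4}}
			\sum_{\va{1}^\transpose\mat{\Omega}=\vb{\nu}}
				\frac{(\mat{\Theta}^{-\half})^{\mat{\Omega}}}{\mat{\Omega}!}
				\sqrt{\vb{\mu}!}
				\prod_k \hg{\mu_k}(0)
	~,
\]
with $\vb{\mu}=(\abs{\vb{\omega}_1},\ldots,\abs{\vb{\omega}_n})$ the row sums of $\mat{\Omega}$, so that all the remaining content now lives in the scalars $\hg{\mu_k}(0)$.

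Next I would exploit the parity of the univariate functions. Since $\hg{k}(0)=0$ for odd $k$, every term whose $\vb{\mu}$ has an odd entry vanishes, and the sum collapses to those $\mat{\Omega}$ with all row sums even, i.e.\ $\vb{\mu}\in(2\Natural)^n$ --- exactly the extra constraint advertised in the statement. For an even index $\mu_k$, substituting \cref{HG_univariate} at $0$ together with $H_{\mu_k}(0)=(-2)^{\mu_k/2}(\mu_k-1)!!$ gives
\[
	\hg{\mu_k}(0)
		= \frac{(-2)^{\mu_k/2}(\mu_k-1)!!}{\sqrt{\sqrt{\mpi}\,2^{\mu_k}\mu_k!}}
		= \frac{\ii^{\mu_k}(\mu_k-1)!!}{\mpi^{\frac{1}{4}}\sqrt{\mu_k!}}
	~,
\]
where I have written $(-2)^{\mu_k/2}=2^{\mu_k/2}\ii^{\mu_k}$ and cancelled $2^{\mu_k/2}$ against $\sqrt{2^{\mu_k}}$.

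Finally I would take the product over $k$ and reassemble. Using $\sum_k\mu_k=\abs{\vb{\mu}}$ and $\prod_k(\mu_k-1)!!=(\vb{\mu}-\vb{1})!!$, the product factors as $\prod_k\hg{\mu_k}(0)=\mpi^{-n/4}\,\ii^{\abs{\vb{\mu}}}(\vb{\mu}-\vb{1})!!/\sqrt{\vb{\mu}!}$. The decisive simplification is that this $\sqrt{\vb{\mu}!}$ cancels exactly against the $\sqrt{\vb{\mu}!}$ already present in the decomposition, while $\mpi^{-n/4}$ merges with $\abs{\mat{\Theta}}^{-\frac{1}{4}}$ into the prefactor $(\mpi^n\abs{\mat{\Theta}})^{-\frac{1}{4}}$, delivering the stated formula. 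I expect the only real difficulty to be bookkeeping rather than ideas: one must verify the normalization $H_{2m}(0)=(-2)^m(2m-1)!!$ against the identity $(2m)!=2^m\,m!\,(2m-1)!!$, track the conversion $(-1)^{\abs{\vb{\mu}}/2}=\ii^{\abs{\vb{\mu}}}$ (legitimate precisely because $\abs{\vb{\mu}}$ is even on the surviving terms), and confirm that no spurious contributions survive the parity restriction.
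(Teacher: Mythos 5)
Your proof is correct and follows exactly the route the paper intends: the corollary is stated as an immediate consequence of \cref{corollary_dimensional_decomposition} evaluated at $\va{r}=0$ together with the Hermite values $H_k(0)$, and you have simply carried out that substitution, with the parity restriction to even row sums, the cancellation of $\sqrt{\vb{\mu}!}$, and the conversion $(-2)^{\mu_k/2}=\ii^{\mu_k}2^{\mu_k/2}$ all handled correctly. No gaps; this fills in the bookkeeping the paper leaves implicit.
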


\begin{lemma}[Offseted argument] \label{lemma_offseted}
	For an arbitrary $\va{s}\in\Complex^n$:
	\begin{align}
		\HG[\mat{\Theta}]{\vb{\nu}}\qty(\va{r}+\va{s})
			&=
				2^{-\frac{\abs{\vb{\nu}}}{2}}
				\qty(\mpi^n \abs{\mat{\Theta}})^{\frac{1}{4}}
				\ee^{\frac{1}{2} \qty(\va{r}-\va{s})^\transpose \mat{\Theta}^{-1}\qty(\va{r}-\va{s})}
            \nonumber
            \\
            &\qquad\qquad\quad\times
				\sum_{\substack{
					\vb{\mu}\in\Natural^n \\
					\mathclap{
						\text{s.t. }
						\vb{\mu}\preceq\vb{\nu}
					}
				}}
					{\binom{\vb{\nu}}{\vb{\mu}}}^{\half}
					\HG[\mat{\Theta}]{\vb{\nu}-\vb{\mu}}\qty(\sqrt{2}\va{r})
					\HG[\mat{\Theta}]{\vb{\mu}}\qty(\sqrt{2}\va{s})
		~.
	\end{align}
\end{lemma}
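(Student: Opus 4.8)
The plan is to verify the identity at the level of generating functions: since the AHG functions appear precisely as the coefficients of $\va{x}^{\vb{\nu}}$ in the everywhere-convergent generating function \cref{HG_generating_function}, it suffices to multiply both sides of the claimed identity by $\sqrt{2^{\abs{\vb{\nu}}}/\vb{\nu}!}\,\va{x}^{\vb{\nu}}$, sum over $\vb{\nu}\in\Natural^n$, and check that the two resulting generating functions coincide as functions of $\va{x}$; equating the coefficients of $\va{x}^{\vb{\nu}}$ then recovers the lemma. For the left-hand side this is immediate: summing $\sqrt{2^{\abs{\vb{\nu}}}/\vb{\nu}!}\,\va{x}^{\vb{\nu}}\HG[\mat{\Theta}]{\vb{\nu}}\qty(\va{r}+\va{s})$ is just \cref{HG_generating_function} read at the shifted argument $\va{r}+\va{s}$, giving $\qty(\mpi^n\abs{\mat{\Theta}})^{-1/4}\exp\qty[-\tfrac12\qty(\va{r}+\va{s})^\transpose\mat{\Theta}^{-1}\qty(\va{r}+\va{s}) + \va{x}^\transpose\mat{\Theta}^{-1}\qty(2\qty(\va{r}+\va{s})-\va{x})]$.

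The substance of the argument is on the right-hand side. Here I would reindex the inner sum by $\vb{\alpha}=\vb{\nu}-\vb{\mu}$ and $\vb{\beta}=\vb{\mu}$, so that $\vb{\nu}=\vb{\alpha}+\vb{\beta}$ and the pair $\qty(\vb{\alpha},\vb{\beta})$ ranges freely over $\Natural^n\times\Natural^n$. The key cancellations are twofold. First, the generating weight $\sqrt{2^{\abs{\vb{\nu}}}/\vb{\nu}!}$ absorbs the $\vb{\nu}$-dependent prefactor $2^{-\abs{\vb{\nu}}/2}$ to leave simply $1/\sqrt{\vb{\nu}!}=1/\sqrt{\qty(\vb{\alpha}+\vb{\beta})!}$ (only the $\vb{\nu}$-independent scalar $\qty(\mpi^n\abs{\mat{\Theta}})^{1/4}\exp\qty[\tfrac12\qty(\va{r}-\va{s})^\transpose\mat{\Theta}^{-1}\qty(\va{r}-\va{s})]$ survives out front). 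Second, the factor $\binom{\vb{\nu}}{\vb{\mu}}^{\half}=\qty(\qty(\vb{\alpha}+\vb{\beta})!/\vb{\alpha}!\vb{\beta}!)^{\half}$ then cancels this $1/\sqrt{\qty(\vb{\alpha}+\vb{\beta})!}$ exactly, so the double sum factors as a product of two independent single sums, $\big(\sum_{\vb{\alpha}}\va{x}^{\vb{\alpha}}/\sqrt{\vb{\alpha}!}\,\HG[\mat{\Theta}]{\vb{\alpha}}\qty(\sqrt{2}\va{r})\big)\big(\sum_{\vb{\beta}}\va{x}^{\vb{\beta}}/\sqrt{\vb{\beta}!}\,\HG[\mat{\Theta}]{\vb{\beta}}\qty(\sqrt{2}\va{s})\big)$. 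Each factor is \cref{HG_generating_function} itself, now read with generating variable $\va{x}/\sqrt{2}$ (which converts the native weight $\sqrt{2^{\abs{\vb{\alpha}}}/\vb{\alpha}!}$ into the bare $1/\sqrt{\vb{\alpha}!}$) and argument $\sqrt{2}\va{r}$, respectively $\sqrt{2}\va{s}$.

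It then remains to collect the exponentials. Each factor contributes $\exp\qty[-\va{r}^\transpose\mat{\Theta}^{-1}\va{r} + 2\va{x}^\transpose\mat{\Theta}^{-1}\va{r} - \tfrac12\va{x}^\transpose\mat{\Theta}^{-1}\va{x}]$ (and the analogue with $\va{s}$), while the product of the two $\qty(\mpi^n\abs{\mat{\Theta}})^{-1/4}$ normalizations combines with the surviving front factor to give the correct $\qty(\mpi^n\abs{\mat{\Theta}})^{-1/4}$. The only genuine calculation is the quadratic bookkeeping: using symmetry of $\mat{\Theta}^{-1}$, the $\va{r},\va{s}$-terms $\tfrac12\qty(\va{r}-\va{s})^\transpose\mat{\Theta}^{-1}\qty(\va{r}-\va{s}) - \va{r}^\transpose\mat{\Theta}^{-1}\va{r} - \va{s}^\transpose\mat{\Theta}^{-1}\va{s}$ collapse to $-\tfrac12\qty(\va{r}+\va{s})^\transpose\mat{\Theta}^{-1}\qty(\va{r}+\va{s})$, and the $\va{x}$-cross terms reassemble into $\va{x}^\transpose\mat{\Theta}^{-1}\qty(2\qty(\va{r}+\va{s})-\va{x})$. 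This reproduces the left-hand generating function verbatim, and equating coefficients of $\va{x}^{\vb{\nu}}$ completes the proof. I expect no real obstacle beyond this bookkeeping; the one place to be careful is the $\va{x}\mapsto\va{x}/\sqrt{2}$ rescaling, since the $\sqrt{2}$ inside the arguments $\sqrt{2}\va{r},\sqrt{2}\va{s}$ and the $2^{-\abs{\vb{\nu}}/2}$ prefactor are precisely what make both the weights and the exponents align.
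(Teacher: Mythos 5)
Your proof is correct and is essentially the paper's own generating-function argument run in reverse: the paper expands the generating function at $\va{r}+\va{s}$ into a product of two generating functions (with anisotropy $\frac{1}{2}\mat{\Theta}$ and variable $\va{x}/2$, then rescales via \cref{basic_property_analytic_b}), while you resum the claimed right-hand side and factor it into the same two generating functions, absorbing the scaling into the generating variable $\va{x}/\sqrt{2}$ at arguments $\sqrt{2}\va{r},\sqrt{2}\va{s}$, which merely bypasses the explicit appeal to \cref{basic_property_analytic_b}. Both directions rest on the identical exponential-splitting identity and binomial cancellation, so the two proofs coincide in substance.
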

\begin{proof}
	Via the generating function:
	\begin{align}
		\!\!\sum_{\vb{\nu}\in\Natural^n}
			\sqrt{\frac{2^{\abs{\vb{\nu}}}}{\vb{\nu}!}}
			\va{x}^{\vb{\nu}}
			&\HG[\mat{\Theta}]{\vb{\nu}}\qty(\va{r}+\va{s})
		=
			\frac{
				\ee^{-\frac{1}{2}\qty(\va{r}+\va{s})^\transpose\mat{\Theta}^{-1}\qty(\va{r}+\va{s}) + \va{x}^\transpose\mat{\Theta}^{-1}\qty(2\qty(\va{r}+\va{s})-\va{x})}
			}{\qty(\mpi^n\abs{\mat{\Theta}})^{\frac{1}{4}}}
		\\
		&=
			\frac{
					\ee^{-\frac{1}{2}\qty(\va{r}-\va{s})^\transpose\mat{\Theta}^{-1}\qty(\va{r}-\va{s})}
				 }
				 {\qty(\mpi^n\abs{\mat{\Theta}})^{\frac{1}{4}}}
			\ee^{-\frac{1}{2}\va{r}^\transpose\qty(\frac{1}{2}\mat{\Theta})^{-1}\va{r} +
				  \frac{\va{x}^\transpose}{2}(\frac{1}{2}\mat{\Theta})^{-1}\qty(2\va{r}-\frac{\va{x}}{2})}
        \nonumber \\
        &\qquad\qquad
        \times
			\ee^{-\frac{1}{2}\va{s}^\transpose\qty(\frac{1}{2}\mat{\Theta})^{-1}\va{s} +
				  \frac{\va{x}^\transpose}{2}(\frac{1}{2}\mat{\Theta})^{-1}\qty(2\va{s}-\frac{\va{x}}{2})}
		\\
		&=
			\frac{
					\qty(\mpi^n\abs{\mat{\Theta}})^{\frac{1}{4}}
				 }
				 {2^{\frac{n}{2}}}
			\ee^{-\frac{1}{2}\qty(\va{r}-\va{s})^\transpose\mat{\Theta}^{-1}\qty(\va{r}-\va{s})}
        \nonumber
        \\
        &\qquad \times
				\sum_{\vb{\nu},\vb{\mu}\in\Natural^n  \vphantom{\vb{\mu}}}
					\sqrt{\frac{2^{\abs{\vb{\nu}}}}{\vb{\nu}!}
						  \frac{2^{\abs{\vb{\mu}}}}{\vb{\mu}!}}
					\qty(\frac{1}{2}\va{x})^{\vb{\nu}+\vb{\mu}}
					\HG[\frac{1}{2}\mat{\Theta}]{\vb{\nu}}\qty(\va{r})
					\HG[\frac{1}{2}\mat{\Theta}]{\vb{\mu}}\qty(\va{s})
		~.
	\end{align}
	Equating the powers of $\va{x}$ and applying \cref{basic_property_analytic_b} yields the desired result.
\end{proof}

\begin{lemma}[Product of AHG functions] \label{lemma_product_of_hg}
	\begin{align}
		\HG[\mat{\Theta}]{\vb{\nu}}\qty(\va{r})
		\HG[\mat{\Theta}]{\vb{\mu}}\qty(\va{r})
			=&
				\sqrt{\frac{\vb{\nu}!\vb{\mu}!}{2^{\abs{\vb{\nu}}+\abs{\vb{\mu}}}}}
				\frac{\ee^{-\frac{1}{2}\va{r}^\transpose\mat{\Theta}^{-1}\va{r}}}
					 {\qty(\mpi^n\abs{\mat{\Theta}})^{\frac{1}{4}}}
            \nonumber 
            \\
            &\quad\times
				\sum_{\substack{
					\qquad \mat{\Omega}\in\Natural^{n\times n} ,\qquad \\
					\mathclap{
						\text{s.t.}\; \vb{\beta} = \vb{\nu} - \mat{\Omega}\va{1} \in \Natural^n,
					}\\
					\mathclap{
						\;\  \vb{\gamma} = \vb{\mu} - \va{1}^\transpose\mat{\Omega} \in \Natural^n
					}
				}}
				\frac{
						\qty(2\mat{\Theta}^{-1})^{\mat{\Omega}}
						\sqrt{2^{\abs{\vb{\beta}}+\abs{\vb{\gamma}}}\qty(\vb{\beta}+\vb{\gamma})!}
					 }
					 {\mat{\Omega}!\vb{\beta}!\vb{\gamma}!}
				\HG[\mat{\Theta}]{\vb{\beta}+\vb{\gamma}}\qty(\va{r})
		~.
	\end{align}
	That is, the sum is over the multi-index matrices $\mat{\Omega}$, with $\vb{\beta}$ being $\vb{\nu}$ minus the row sums of $\mat{\Omega}$, $\vb{\gamma}$ being $\vb{\mu}$ minus the column sums of $\mat{\Omega}$ and such that $\vb{\beta},\vb{\gamma}$ are multi-indices
(consisting of non-negative integers).
\end{lemma}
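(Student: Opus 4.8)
The plan is to follow the generating-function technique used throughout this section. I would introduce two independent formal vector variables $\va{x},\va{y}\in\Complex^n$ and multiply two copies of the AHG generating function \cref{HG_generating_function}. On the series side this produces the double sum $\sum_{\vb{\nu},\vb{\mu}}\sqrt{2^{\abs{\vb{\nu}}+\abs{\vb{\mu}}}/\qty(\vb{\nu}!\vb{\mu}!)}\,\va{x}^{\vb{\nu}}\va{y}^{\vb{\mu}}\HG[\mat{\Theta}]{\vb{\nu}}\qty(\va{r})\HG[\mat{\Theta}]{\vb{\mu}}\qty(\va{r})$, whose coefficient of $\va{x}^{\vb{\nu}}\va{y}^{\vb{\mu}}$ carries exactly the product to be expanded, so that matching coefficients at the end will isolate the desired identity.

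The one genuine computation is the refactoring of the product of the two closed-form exponentials. Writing out both exponents and completing the square in the source variables, the key observation is that the bilinear cross term $2\va{x}^\transpose\mat{\Theta}^{-1}\va{y}$ is precisely the discrepancy between $-\va{x}^\transpose\mat{\Theta}^{-1}\va{x}-\va{y}^\transpose\mat{\Theta}^{-1}\va{y}$ and $-\qty(\va{x}+\va{y})^\transpose\mat{\Theta}^{-1}\qty(\va{x}+\va{y})$. I therefore expect the product of the two closed forms to collapse to
\[
	\frac{\ee^{-\frac{1}{2}\va{r}^\transpose\mat{\Theta}^{-1}\va{r}}}{\qty(\mpi^n\abs{\mat{\Theta}})^{\frac{1}{4}}}
	\ee^{2\va{x}^\transpose\mat{\Theta}^{-1}\va{y}}
	\sum_{\vb{\lambda}\in\Natural^n}
		\sqrt{\frac{2^{\abs{\vb{\lambda}}}}{\vb{\lambda}!}}
		\qty(\va{x}+\va{y})^{\vb{\lambda}}
		\HG[\mat{\Theta}]{\vb{\lambda}}\qty(\va{r})
	,
\]
where the trailing sum is a single AHG generating function evaluated at $\va{x}+\va{y}$; the surviving Gaussian prefactor and the remaining quarter power of $\mpi^n\abs{\mat{\Theta}}$ are what one copy of \cref{HG_generating_function} leaves behind after cancellation. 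This is the step where the anisotropy $\mat{\Theta}^{-1}$ enters, and it is the only place requiring care.

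With that in hand the remainder is bookkeeping across three power series. The multinomial theorem splits $\qty(\va{x}+\va{y})^{\vb{\lambda}}=\sum_{\vb{\beta}+\vb{\gamma}=\vb{\lambda}}\qty(\vb{\lambda}!/\vb{\beta}!\vb{\gamma}!)\va{x}^{\vb{\beta}}\va{y}^{\vb{\gamma}}$, which combines with $\sqrt{2^{\abs{\vb{\lambda}}}/\vb{\lambda}!}$ into the coefficient $\sqrt{2^{\abs{\vb{\beta}}+\abs{\vb{\gamma}}}\qty(\vb{\beta}+\vb{\gamma})!}/\qty(\vb{\beta}!\vb{\gamma}!)$. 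The main obstacle—again purely combinatorial—is the bilinear exponential: expanding $\ee^{2\va{x}^\transpose\mat{\Theta}^{-1}\va{y}}=\prod_{j,k}\ee^{2\qty[\mat{\Theta}^{-1}]_{jk}x_jy_k}$ entrywise organizes the summation over a multi-index matrix $\mat{\Omega}=\qty[\omega_{jk}]$, contributing the weight $\qty(2\mat{\Theta}^{-1})^{\mat{\Omega}}/\mat{\Omega}!$ together with the monomial $\va{x}^{\mat{\Omega}\va{1}}\va{y}^{\va{1}^\transpose\mat{\Omega}}$, so that $\va{x}$ collects the row sums and $\va{y}$ the column sums of $\mat{\Omega}$.

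Finally I would equate the coefficient of $\va{x}^{\vb{\nu}}\va{y}^{\vb{\mu}}$ on the two sides. Matching the $\va{x}$ and $\va{y}$ powers forces $\mat{\Omega}\va{1}+\vb{\beta}=\vb{\nu}$ and $\va{1}^\transpose\mat{\Omega}+\vb{\gamma}=\vb{\mu}$, i.e. $\vb{\beta}=\vb{\nu}-\mat{\Omega}\va{1}$ and $\vb{\gamma}=\vb{\mu}-\va{1}^\transpose\mat{\Omega}$, and the requirement $\vb{\beta},\vb{\gamma}\in\Natural^n$ is exactly the stated constraint that selects the admissible $\mat{\Omega}$. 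Clearing the left-hand coefficient by multiplying through by $\sqrt{\vb{\nu}!\vb{\mu}!/2^{\abs{\vb{\nu}}+\abs{\vb{\mu}}}}$ and collecting the weight $\qty(2\mat{\Theta}^{-1})^{\mat{\Omega}}\sqrt{2^{\abs{\vb{\beta}}+\abs{\vb{\gamma}}}\qty(\vb{\beta}+\vb{\gamma})!}/\qty(\mat{\Omega}!\vb{\beta}!\vb{\gamma}!)$ then reproduces the claimed identity verbatim.
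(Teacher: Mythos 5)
Your proposal is correct and follows essentially the same route as the paper's proof: multiplying two copies of the generating function, rewriting the exponents so that the product collapses to a single generating function in $\va{x}+\va{y}$ times the cross factor $\ee^{2\va{x}^\transpose\mat{\Theta}^{-1}\va{y}}$, expanding both by the multinomial theorem (your entrywise expansion of the cross exponential over $\mat{\Omega}$ is equivalent to the paper's expansion of $\sum_m (2\va{x}^\transpose\mat{\Theta}^{-1}\va{y})^m/m!$), and equating coefficients of $\va{x}^{\vb{\nu}}\va{y}^{\vb{\mu}}$. All the key identities you state, including the coefficient bookkeeping and the row-sum/column-sum matching that yields the constraints on $\vb{\beta}$ and $\vb{\gamma}$, check out.
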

\begin{proof}
	\begin{align}
		\sum_{\vb{\nu},\vb{\mu}\in\Natural^n} &
			\sqrt{\frac{2^{\abs{\vb{\nu}}+\abs{\vb{\mu}}}}{\vb{\nu}!\vb{\mu}!}}
			\va{x}^{\vb{\nu}}
			\va{y}^{\vb{\mu}}
			\HG[\mat{\Theta}]{\vb{\nu}}\qty(\va{r})
			\HG[\mat{\Theta}]{\vb{\mu}}\qty(\va{r})
        \nonumber
        \\
		&=
			\frac{
					\ee^{-\frac{1}{2}\va{r}^\transpose\mat{\Theta}^{-1}\va{r} + \va{x}^\transpose\mat{\Theta}^{-1}\qty(2\va{r}-\va{x})}
					\ee^{-\frac{1}{2}\va{r}^\transpose\mat{\Theta}^{-1}\va{r} + \va{y}^\transpose\mat{\Theta}^{-1}\qty(2\va{r}-\va{y})}
				 }
				 {\qty(\mpi^n\abs{\mat{\Theta}})^{\frac{1}{2}}}
		\\
		&=
			\frac{\ee^{-\frac{1}{2}\va{r}^\transpose\mat{\Theta}^{-1}\va{r}}}
				 {\qty(\mpi^n\abs{\mat{\Theta}})^{\frac{1}{2}}}
				\ee^{-\frac{1}{2}\va{r}^\transpose\mat{\Theta}^{-1}\va{r} + \qty(\va{x}+\va{y})^\transpose\mat{\Theta}^{-1}\qty[2\va{r}-\qty(\va{x}+\va{y})]}
				\ee^{2\va{x}^\transpose\mat{\Theta}^{-1}\va{y}}
		\\
		&=
			\frac{\ee^{-\frac{1}{2}\va{r}^\transpose\mat{\Theta}^{-1}\va{r}}}
				 {\qty(\mpi^n\abs{\mat{\Theta}})^{\frac{1}{4}}}
			\sum_{\vb{\alpha}\in\Natural^n}
				\sqrt{\frac{2^{\abs{\vb{\alpha}}}}{\vb{\alpha}!}}
				\qty(\va{x}+\va{y})^{\vb{\alpha}}
				\HG[\mat{\Theta}]{\vb{\alpha}}\qty(\va{r})
			\sum_{m\geq 0}
				\frac{\qty(2\va{x}^\transpose\mat{\Theta}^{-1}\va{y})^m}{m!}
		.
	\end{align}
	Denote $\mat{\Theta}^{-1}=\qty[q_{jk}]$, the elements of $\mat{\Theta}^{-1}$, and apply again the multinomial theorem:
	\begin{align}
		\qty(\va{x}+\va{y})^{\vb{\alpha}}
		&=
			\sum_{\substack{
				\vb{\beta}\in\Natural^n,\\
				\mathclap{
					\text{s.t. }
					\vb{\beta}\preceq\vb{\alpha}
				}
			}}
				\binom{\vb{\alpha}}{\vb{\beta}}
				\va{x}^{\vb{\beta}}
				\va{y}^{\vb{\alpha}-\vb{\beta}}
		~,
		\\
		\sum_{m\geq 0}
			\frac{\qty(2\va{x}^\transpose\mat{\Theta}^{-1}\va{y})^m}{m!}
		&=
		\sum_{m\geq 0}
			\frac{\qty[2 \sum_{jk} q_{jk}x_jy_k]^m}{m!}
		=
			\sum_{\substack{
				\mat{\Omega}\in\Natural^{n\times n}
			}}
			\frac{\qty(2\mat{\Theta}^{-1})^{\mat{\Omega}}}{\mat{\Omega}!}
			\qty(\va{x}\va{y}^\transpose)^{\mat{\Omega}}
		~.
	\end{align}
	Equating the powers of $\va{x}$ and $\va{y}$ proves the lemma.
\end{proof}

\cref{lemma_product_of_hg,lemma_offseted} extend well-known results from the univariate case to the multivariate anisotropic case.

\section{Linear Canonical Transform} \label{section_lct}

The linear canonical transform (LCT) generalizes important well-known integral transforms, such as the (fractional) Fourier transform and the Fresnel transform.
The $n$-dimensional LCT (with unitary, angular-frequency kernels) is defined with respect to a matrix $\mat{A}=\qty[\begin{smallmatrix}a\ b\\c\ d\end{smallmatrix}]\in\Complex^{2\times2}$ with
$\abs{\mat{A}}=1$ as
\begin{align}
	\lct[\mat{A}]{f}\qty(\va{\zeta})
		&\triangleq
			\qty(\frac{1}{2\mpi\ii b})^{\frac{n}{2}}
			\ee^{\ii \frac{d}{2b} \va{\zeta}^2}
			\int_{\Real^n} \dd{\va{r}^\prime}
				f\qty(\va{r}^\prime)
				\ee^{-\ii \frac{1}{2 b} \va{r}^\prime\cdot \qty(
					2 \va{\zeta} - a \va{r}^\prime
				)}
		.
	\label{lct}
\end{align}

Our main result in this section follows:
\begin{theorem}[Linear canonical transform of the AHG function] \label{theorem_lct}
Suppose
	$\mat{A}$ is as above. Then
	\begin{align}
		\lct[\mat{A}]{\HG[\mat{\Theta}]{\vb{\nu}}}\qty(\va{\zeta})
			=
				\qty(\frac{1}{\ii b})^{\abs{\vb{\nu}}+\frac{n}{2}}
				\ee^{-\frac{1}{2} \va{\xi}^\transpose \mat{C} \va{\xi}}
				\frac{\abs{\mat{\Xi}}^{\frac{1}{4}}}
					 {\abs{\mat{\Sigma}}^{\frac{1}{2}}\abs{\mat{\Theta}}^{\frac{1}{4}}}
				\HGd[\mat{\Xi}]{\vb{\nu}}\qty(\va{\xi})
		\label{eq_lct_hg_thrm}
		~,
	\end{align}
    where
    \begin{subequations}
    \label{LCT_shorthands}
	\begin{alignat}{3}
		\mat{\Sigma} &= \mat{\Theta}^{-1}-\ii\frac{a}{b}\mat{I}
		~,
		& \qquad\qquad
		\mat{\Xi} &= b^2[2(\mat{\Theta}\mat{\Sigma}\mat{\Theta})^{-1} - \mat{\Theta}^{-1}]
        \label{LCT_shorthands_a}
		~,
		\\
		\mat{C}   &= b^{-1} \mat{\Theta} \qty(b^{-1}\mat{\Sigma}-\ii d \mat{\Sigma}^2) \mat{\Theta} - \mat{\Xi}^{-1}
		~,
		&
		\va{\xi}  &= \mat{\Sigma}^{-1}\mat{\Theta}^{-1}\va{\zeta}
        \label{LCT_shorthands_b}
		~,
	\end{alignat}
    \end{subequations}
	under the conditions that $b\neq 0$ and $\mat{\Sigma},\mat{\Xi}$ both have a positive definite real part.
\begin{remark}
	A sufficient condition for $\Re\mat{\Sigma}\succ 0$ is $a,b\in\Real$ (as $\Re\mat{\Theta}^{-1}\succ 0$).
\end{remark}
\end{theorem}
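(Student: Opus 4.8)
The plan is to transform the generating function \cref{HG_generating_function} as a whole rather than each $\HG[\mat{\Theta}]{\vb{\nu}}$ separately, exploiting the linearity of the LCT together with the fact that the right-hand side of \cref{HG_generating_function} is a Gaussian in $\va{r}$. Applying $\lct[\mat{A}]{\cdot}$ to both sides of \cref{HG_generating_function} turns the left-hand side into $\sum_{\vb{\nu}}\sqrt{2^{\abs{\vb{\nu}}}/\vb{\nu}!}\,\va{x}^{\vb{\nu}}\lct[\mat{A}]{\HG[\mat{\Theta}]{\vb{\nu}}}(\va{\zeta})$, while the right-hand side becomes a single Gaussian integral in $\va{r}^\prime$. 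The strategy is then to evaluate that integral in closed form, recognize the result as a prefactor times the dual generating function \cref{HG_dual_generating_function} with anisotropy $\mat{\Xi}$ and argument $\va{\xi}$, and finally match equal powers of $\va{x}$.

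First I would collect, inside the integral, the full $\va{r}^\prime$-dependence coming from the Gaussian in \cref{HG_generating_function} and from the LCT kernel $\ee^{-\ii\frac{1}{2b}\va{r}^\prime\cdot(2\va{\zeta}-a\va{r}^\prime)}$. The quadratic part in $\va{r}^\prime$ is $-\frac{1}{2}\va{r}^{\prime\transpose}(\mat{\Theta}^{-1}-\ii\frac{a}{b}\mat{I})\va{r}^\prime=-\frac{1}{2}\va{r}^{\prime\transpose}\mat{\Sigma}\va{r}^\prime$, which is precisely how $\mat{\Sigma}$ enters, and the linear part is $(2\mat{\Theta}^{-1}\va{x}-\frac{\ii}{b}\va{\zeta})^\transpose\va{r}^\prime$. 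Completing the square and using the standard formula $\int_{\Real^n}\ee^{-\frac{1}{2}\va{r}^{\prime\transpose}\mat{\Sigma}\va{r}^\prime+\va{b}^\transpose\va{r}^\prime}\dd{\va{r}^\prime}=(2\mpi)^{n/2}\abs{\mat{\Sigma}}^{-1/2}\ee^{\frac{1}{2}\va{b}^\transpose\mat{\Sigma}^{-1}\va{b}}$ (valid exactly under the hypothesis $\Re\mat{\Sigma}\succ 0$, which guarantees convergence) produces the $(1/(\ii b))^{n/2}$ prefactor after combining with the $(1/(2\mpi\ii b))^{n/2}$ from \cref{lct}, together with the factor $\abs{\mat{\Sigma}}^{-1/2}(\mpi^n\abs{\mat{\Theta}})^{-1/4}$ and a quadratic exponential in $\va{x}$ and $\va{\zeta}$ (the latter also receiving the $\frac{\ii d}{2b}\va{\zeta}^\transpose\va{\zeta}$ phase from \cref{lct}).

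Next I would reorganize that quadratic exponential. Substituting $\va{\xi}=\mat{\Sigma}^{-1}\mat{\Theta}^{-1}\va{\zeta}$ and $\va{x}^\prime=\va{x}/(\ii b)$, I would verify the three matching identities obtained by comparing, respectively, the pure-$\va{x}$, the mixed $\va{x}\va{\zeta}$, and the pure-$\va{\zeta}$ quadratic terms against the exponent $-\frac{1}{2}\va{\xi}^\transpose\mat{\Xi}^{-1}\va{\xi}+\va{x}^{\prime\transpose}(2\va{\xi}-\mat{\Xi}\va{x}^\prime)$ of \cref{HG_dual_generating_function} plus the prefactor exponential $-\frac{1}{2}\va{\xi}^\transpose\mat{C}\va{\xi}$. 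The $\va{x}$-quadratic comparison reproduces exactly $\mat{\Xi}=b^2[2(\mat{\Theta}\mat{\Sigma}\mat{\Theta})^{-1}-\mat{\Theta}^{-1}]$; the $\va{\zeta}$-quadratic comparison reproduces $\mat{C}$; and the cross term is then an automatic consistency check. A key simplification is that $\mat{\Sigma}=\mat{\Theta}^{-1}-\ii\frac{a}{b}\mat{I}$ is a polynomial in $\mat{\Theta}^{-1}$ and hence commutes with $\mat{\Theta}$, so that $\mat{\Theta}\va{\xi}=\mat{\Sigma}^{-1}\va{\zeta}$ and the sandwiched matrix $\mat{\Sigma}^{-1}(b^{-1}\mat{\Sigma}-\ii d\mat{\Sigma}^2)\mat{\Sigma}^{-1}$ arising from $\mat{C}+\mat{\Xi}^{-1}=b^{-1}\mat{\Theta}(b^{-1}\mat{\Sigma}-\ii d\mat{\Sigma}^2)\mat{\Theta}$ collapses to $b^{-1}\mat{\Sigma}^{-1}-\ii d\mat{I}$. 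Once the exponential is recognized as \cref{HG_dual_generating_function} at $\va{x}^\prime$, I would absorb the leftover $(\ii b)^{-\abs{\vb{\nu}}}$ into the prefactor and match coefficients of $\va{x}^{\vb{\nu}}$ to obtain \cref{eq_lct_hg_thrm}.

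The main obstacle is the pure-$\va{\zeta}$ quadratic identity, i.e. verifying that the combination defining $\mat{C}$ together with $\mat{\Xi}^{-1}$ collapses, via the commutativity above, to exactly $\frac{\ii d}{2b}\va{\zeta}^\transpose\va{\zeta}-\frac{1}{2b^2}\va{\zeta}^\transpose\mat{\Sigma}^{-1}\va{\zeta}$, which is what the direct Gaussian integration yields; this is where all the structure of $\mat{C}$ is consumed. Secondary technical points requiring care are the consistent choice of branches for the complex quantities $\abs{\mat{\Sigma}}^{-1/2}$, $\abs{\mat{\Xi}}^{1/4}$ and $(\ii b)^{-n/2}$, the interchange of summation and integration (legitimate because the generating series is entire in $\va{x}$), and the use of $\Re\mat{\Xi}\succ 0$ to ensure the resulting object is a genuine dual AHG function.
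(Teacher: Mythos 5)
Your proposal is correct and follows essentially the same route as the paper's own proof: apply the LCT to the generating function \cref{HG_generating_function}, evaluate the resulting Gaussian integral with quadratic form $\mat{\Sigma}$ (convergent since $\Re\mat{\Sigma}\succ 0$), recognize the outcome as $\ee^{-\frac{1}{2}\va{\xi}^\transpose\mat{C}\va{\xi}}$ times the dual generating function \cref{HG_dual_generating_function} with anisotropy $\mat{\Xi}$ evaluated at the scaled variable $\va{x}/(\ii b)$, and equate powers of $\va{x}$. Your explicit verification of the three quadratic-term matchings and the observation that $\mat{\Sigma}$ commutes with $\mat{\Theta}$ merely spell out details the paper leaves implicit in its "rewrite in the form of the dual generating function" step.
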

\begin{proof}
	Take the LCT (with respect to the variable $\va{r}$) of each side of the generating function for $\HG[\mat{\Theta}]{\vb{\nu}}$ (\cref{HG_generating_function}).
	Let $\va{y}=2\mat{\Theta}^{-1}\va{x}-\ii\frac{1}{b}\va{\zeta}$ and rewrite the integral as a multidimensional Gaussian integral with a linear term, which admits a well-known closed-form \cite{Stoof2009}
    (convergence is ensured by $\Re\mat{\Sigma}\succ 0$). Then
	\begin{align}
		\sum_{\vb{\nu}\in\Natural^n}
			\sqrt{\frac{2^{\abs{\vb{\nu}}}}{\vb{\nu}!}}
			\va{x}^{\vb{\nu}}
			&\lct[\mat{A}]{\HG[\mat{\Theta}]{\vb{\nu}}}\qty(\va{\zeta})
        \nonumber 
        \\
		&=
			\qty(\mpi^n\abs{\mat{\Theta}})^{-\frac{1}{4}}
			\lct[\mat{A}]{
				\ee^{-\frac{1}{2}\qty(\va{r}^\prime)^\transpose\mat{\Theta}^{-1}\va{r}^\prime - \va{x}^\transpose\mat{\Theta}^{-1}\va{x} + 2\qty(\va{r}^\prime)^\transpose\mat{\Theta}^{-1}\va{x}}
			}\qty(\va{\zeta})
			\\
		&=
			\qty(\mpi^n\abs{\mat{\Theta}})^{-\frac{1}{4}}
			\qty(\frac{1}{2\mpi\ii b})^{\frac{n}{2}}
			\ee^{\ii \frac{d}{2b} \va{\zeta}^2}
			\ee^{-\va{x}^\transpose\mat{\Theta}^{-1}\va{x}}
			\int_{\Real^n} \dd{\va{r}^\prime}
				\ee^{
					-\frac{1}{2}\qty(\va{r}^\prime)^\transpose
					\mat{\Sigma}
					\va{r}^\prime
					+
					\va{y}^\transpose \va{r}^\prime
				}
			\\
		&= \label{eq_gaussian_integral_thm_lct_proof}
			\qty(\mpi^n\abs{\mat{\Theta}})^{-\frac{1}{4}}
			\frac{1}{\qty(\ii b)^{\frac{n}{2}} \abs{\mat{\Sigma}}^{\frac{1}{2}}}
			\ee^{\ii \frac{d}{2b} \va{\zeta}^2}
			\ee^{-\va{x}^\transpose\mat{\Theta}^{-1}\va{x}}
			\ee^{
				\frac{1}{2}
				\va{y}^\transpose \mat{\Sigma}^{-1} \va{y}
			}
		~.
	\end{align}
	Rewrite the right-hand side above in terms of $\va{x},\va{\xi},\mat{\Xi}$ in the form of the generating function of the dual AHG function (\cref{HG_dual_generating_function}), i.e.:
	\begin{align}
		\sum_{\vb{\nu}\in\Natural^n}
			\sqrt{\frac{2^{\abs{\vb{\nu}}}}{\vb{\nu}!}}
			\va{x}^{\vb{\nu}}
			&\lct[\mat{A}]{\HG[\mat{\Theta}]{\vb{\nu}}}\qty(\va{\zeta})
        \nonumber
        \\
		=&
			\frac{\qty(\mpi^n\abs{\mat{\Theta}})^{-\frac{1}{4}}}
				 {\qty(\ii b)^{\frac{n}{2}} \abs{\mat{\Sigma}}^{\frac{1}{2}}}
			\ee^{-\frac{1}{2} \va{\xi}^\transpose \mat{C} \va{\xi}}
			\ee^{
				-\frac{1}{2} \va{\xi}^\transpose \mat{\Xi}^{-1} \va{\xi} +
				 \qty(-\frac{\ii}{b}\va{x})^\transpose
					\qty[2\va{\xi} - \mat{\Xi}\qty(-\frac{\ii}{b}\va{x})
				 ]}
		\\
		=&
			\frac{1}{\qty(\ii b)^{\frac{n}{2}}}
			\frac{\abs{\mat{\Xi}}^{\frac{1}{4}}}
				 {\abs{\mat{\Sigma}}^{\frac{1}{2}}\abs{\mat{\Theta}}^{\frac{1}{4}}}
			\ee^{-\frac{1}{2} \va{\xi}^\transpose \mat{C} \va{\xi}}
			\sum_{\vb{\nu}\in\Natural^n}
				\sqrt{\frac{2^{\abs{\vb{\nu}}}}{\vb{\nu}!}}
				\qty(-\frac{\ii\va{x}}{b})^{\vb{\nu}}
				\HGd[\mat{\Xi}]{\vb{\nu}}\qty(\va{\xi})
		~.
	\end{align}
	Equating the powers of $\va{x}$ on both sides yields the final result.
\end{proof}

\begin{lemma}[Eigenfunctions of the linear canonical transform] \label{lemma_eigenfunctions_lct}
	If $a=d$, $a^2\neq 1$ and $\sqrt{(a^2-1)b^2}\neq -ab$, then set
    $\alpha=\ii\frac{b^2}{\sqrt{(a^2-1)b^2}+ab}, \beta=\ii\frac{b^2}{\sqrt{(a^2-1)b^2}}$. 
    We have
	\begin{align}
		\lct[\mat{A}]{\HG[\beta\mat{I}]{\vb{\nu}}} &=
			\qty(\frac{1}{\ii b})^{\abs{\vb{\nu}}+\frac{n}{2}}
			\alpha^{\abs{\vb{\nu}}}
			\sqrt{\alpha^n}
			\HG[\beta\mat{I}]{\vb{\nu}}
		~.
	\end{align}
\end{lemma}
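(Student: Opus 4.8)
The plan is to specialize \cref{theorem_lct} to the isotropic anisotropy $\mat{\Theta}=\beta\mat{I}$ and to show that the two obstructions to an eigenvalue equation---the appearance of the \emph{dual} function $\HGd[\mat{\Xi}]{\vb{\nu}}$ rather than $\HG[\mat{\Xi}]{\vb{\nu}}$, and the Gaussian factor $\ee^{-\frac{1}{2}\va{\xi}^\transpose\mat{C}\va{\xi}}$---both disappear for the stated choice of $\beta$. First I would observe that when $\mat{\Theta}=\beta\mat{I}$ every matrix in \cref{LCT_shorthands} is a scalar multiple of $\mat{I}$: writing $\mat{\Sigma}=\sigma\mat{I}$ with $\sigma=\beta^{-1}-\ii\frac{a}{b}$, one has $\va{\xi}=(\sigma\beta)^{-1}\va{\zeta}$, while $\mat{\Xi}=\kappa\mat{I}$ and $\mat{C}$ are likewise scalar. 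The whole statement then collapses to a few scalar identities.

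Setting $D=\sqrt{(a^2-1)b^2}$, so that $\beta=\ii b^2 D^{-1}$ and $D^2=(a^2-1)b^2$, the heart of the argument is to verify, using the hypothesis $a=d$, the three relations
\begin{align*}
	\alpha = \sigma^{-1}, \qquad \mat{\Xi} = (\sigma^2\beta)^{-1}\mat{I}, \qquad \mat{C} = \mat{0}.
\end{align*}
The middle one is equivalent to the fixed-point identity $b^2\sigma(2-\beta\sigma)=\beta$, and both it and $\mat{C}=\mat{0}$ reduce to triviality upon inserting $D^2=(a^2-1)b^2$ (concretely, $(D-ab)(D+ab)=-b^2$); the vanishing of $\mat{C}$ is exactly what kills the quadratic phase.

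With $\mat{\Xi}=(\sigma^2\beta)^{-1}\mat{I}$ established, I would convert the dual function back into an ordinary one. \cref{basic_property_dual} gives $\HGd[\mat{\Xi}]{\vb{\nu}}(\va{\xi})=\abs{\mat{\Xi}}^{-\frac{1}{2}}\HG[\mat{\Xi}^{-1}]{\vb{\nu}}(\mat{\Xi}^{-1}\va{\xi})$, and here $\mat{\Xi}^{-1}=\sigma^2\beta\mat{I}$ with $\mat{\Xi}^{-1}\va{\xi}=\sigma\va{\zeta}$. Since $\sigma^2\beta\mat{I}=\sigma^2(\beta\mat{I})$, applying \cref{basic_property_analytic_b} with $z=\sigma$ restores the anisotropy to $\beta\mat{I}$ and collapses the argument to $\va{\zeta}$, giving $\HG[\sigma^2\beta\mat{I}]{\vb{\nu}}(\sigma\va{\zeta})=\sigma^{-(\abs{\vb{\nu}}+\frac{n}{2})}\HG[\beta\mat{I}]{\vb{\nu}}(\va{\zeta})$. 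Thus the right-hand side of \cref{theorem_lct} is a pure scalar multiple of $\HG[\beta\mat{I}]{\vb{\nu}}(\va{\zeta})$.

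It then remains to collect the scalar prefactor. The determinant factors $\abs{\mat{\Xi}}^{\frac{1}{4}},\abs{\mat{\Sigma}}^{-\frac{1}{2}},\abs{\mat{\Theta}}^{-\frac{1}{4}}$ from the theorem, together with the $\abs{\mat{\Xi}}^{-\frac{1}{2}}$ and $\sigma^{-\frac{n}{2}}$ produced by the conversion, contribute powers of $\kappa$, $\sigma$ and $\beta$; using $\kappa=(\sigma^2\beta)^{-1}$ the $\kappa$- and $\beta$-powers cancel and leave exactly $\sigma^{-(\abs{\vb{\nu}}+\frac{n}{2})}$, which by $\alpha=\sigma^{-1}$ equals $\alpha^{\abs{\vb{\nu}}}\sqrt{\alpha^n}$, matching the claim. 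The main obstacle is purely computational: establishing $\mat{\Xi}=(\sigma^2\beta)^{-1}\mat{I}$ and $\mat{C}=\mat{0}$. These are precisely the conditions that single out the value of $\beta$ in the statement---$\beta$ is chosen so that the LCT maps the isotropic anisotropy to a scalar multiple of itself (a fixed point of the induced action on anisotropies) while simultaneously annihilating the quadratic phase---and I would devote the bulk of the write-up to this algebra, keeping track of the branch of $D=\sqrt{(a^2-1)b^2}$ and assuming throughout that the hypotheses of \cref{theorem_lct} ($\Re\mat{\Sigma},\Re\mat{\Xi}\succ0$ and $b\neq0$) hold.
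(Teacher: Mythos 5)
Your proposal is correct and follows essentially the same route as the paper's own proof: specialize \cref{theorem_lct} to $\mat{\Theta}=\beta\mat{I}$, observe that the quantities in \cref{LCT_shorthands} reduce to scalar multiples of $\mat{I}$ with $\mat{C}=\mat{0}$ (equivalently, $\mat{\Sigma}=\alpha^{-1}\mat{I}$, $\mat{\Xi}=\alpha^{2}\mat{\Theta}^{-1}$, $\mat{\Xi}^{-1}\va{\xi}=\alpha^{-1}\va{\zeta}$), and then convert $\HGd[\mat{\Xi}]{\vb{\nu}}\qty(\va{\xi})$ back into $\alpha^{\abs{\vb{\nu}}}\sqrt{\alpha^n}\,\HG[\beta\mat{I}]{\vb{\nu}}\qty(\va{\zeta})$ via \cref{basic_property_dual,basic_property_analytic_b}. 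The only difference is presentational: you work with $\sigma=\alpha^{-1}$ and explicitly carry out the algebra behind the fixed-point identity and the vanishing of $\mat{C}$ (via $(D-ab)(D+ab)=-b^2$), which the paper simply asserts.
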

\begin{proof}
	$\mat{\Theta}=\beta\mat{I}$, therefore \cref{LCT_shorthands_a,LCT_shorthands_b} become
	\begin{align}
		\mat{\Sigma} = \alpha^{-1}\mat{I} 
		\text{, } &&
		\mat{\Xi} = \alpha^2\mat{\Theta}^{-1} 
		\text{, }&&
		\mat{C}   = 0 
		\text{, } &&
		\mat{\Xi}^{-1}\va{\xi}	= \alpha^{-1}\va{\zeta} 
		.
	\end{align}
	Now consider \cref{eq_lct_hg_thrm} and apply \cref{basic_property_dual,basic_property_analytic_b}:
	\begin{align}
		\frac{\abs{\mat{\Xi}}^{\frac{1}{4}}}
			 {\abs{\mat{\Sigma}}^{\frac{1}{2}}\abs{\mat{\Theta}}^{\frac{1}{4}}}
		\HGd[\mat{\Xi}]{\vb{\nu}}\qty(\va{\xi})
			&=
				\frac{\HG[\mat{\Xi}^{-1}]{\vb{\nu}}\qty(\mat{\Xi}^{-1}\va{\xi})}
					 {\abs{\mat{\Xi}}^{\frac{1}{4}}\abs{\mat{\Sigma}}^{\frac{1}{2}}\abs{\mat{\Theta}}^{\frac{1}{4}}}
			=
				\frac{\HG[\alpha^{-2}\mat{\Theta}]{\vb{\nu}}\qty(\frac{1}{\alpha}\va{\zeta})}
					 {\abs{\mat{\Xi}}^{\frac{1}{4}}\abs{\mat{\Sigma}}^{\frac{1}{2}}\abs{\mat{\Theta}}^{\frac{1}{4}}}
			=
				\alpha^{\abs{\vb{\nu}}}
				\sqrt{\alpha^n}
				\HG[\mat{\Theta}]{\vb{\nu}}\qty(\va{\zeta})
		~,
	\end{align}
	from which \cref{lemma_eigenfunctions_lct} follows.
\end{proof}

The following corollaries follow from \cref{theorem_lct,lemma_eigenfunctions_lct} as well as the basic properties of the AHG functions.

\begin{corollary}[Fourier Transform] \label{cor_ft}
	The LCT reduces to the standard Fourier transform (with unitary, angular frequency kernels) by setting $\mat{A}_\text{FT}=\qty[\begin{smallmatrix}0 & 1 \\ -1 & 0\end{smallmatrix}]$, viz.
		$ \frft*{\HG[\mat{\Theta}]{\vb{\nu}}}
			\triangleq
				\ii^{\frac{n}{2}} \lct*[\mat{A}_\text{FT}]{\HG[\mat{\Theta}]{\vb{\nu}}}
		$.
	In this case the Fourier transform of the anisotropic Hermite-Gauss function is:
	\begin{align}
		\frft{\HG[\mat{\Theta}]{\vb{\nu}}}\qty(\va{\zeta})
			&=
				\qty(-\ii)^{\abs{\vb{\nu}}}
				\HGd[\mat{\Theta}^{-1}]{\vb{\nu}}\!\qty(\va{\zeta})
			=
				\qty(-\ii)^{\abs{\vb{\nu}}}
				\abs{\mat{\Theta}}^{\half}
				\HG[\mat{\Theta}]{\vb{\nu}}\qty(\mat{\Theta}\va{\zeta})
		.
	\end{align}
	In addition,
	\begin{tasks}[label-format=,label=\textrm{\thetheorem.\arabic*},label-width=3em,item-indent=5em](1)
		\task \label[corollary]{fft_eigenfunctions}
			$\HG[\mat{I}]{\vb{\nu}}$ are the eigenfunctions of the Fourier transform with corresponding eigenvalues $(-\ii)^{\abs{\vb{\nu}}}$.
		\task \label[corollary]{fft_order_real_imaginary}
			If $\mat{\Theta}$ is real, then the Fourier transform of an even-order AHG function is purely real, and of an odd-order AHG function purely imaginary.
	\end{tasks}
\end{corollary}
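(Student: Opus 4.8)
The plan is to obtain the corollary as a direct specialization of \cref{theorem_lct}, followed by an appeal to the basic properties. For the Fourier kernel $\mat{A}_\text{FT}=\qty[\begin{smallmatrix}0 & 1 \\ -1 & 0\end{smallmatrix}]$ we have $a=d=0$ and $b=1$, so the first step is to evaluate the shorthands of \cref{LCT_shorthands_a,LCT_shorthands_b} at these values. The $\ii a/b$ term vanishes, giving $\mat{\Sigma}=\mat{\Theta}^{-1}$; then $\mat{\Theta}\mat{\Sigma}\mat{\Theta}=\mat{\Theta}$, so $\mat{\Xi}=2\mat{\Theta}^{-1}-\mat{\Theta}^{-1}=\mat{\Theta}^{-1}$. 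Since $d=0$ and $b=1$, the matrix $\mat{C}=\mat{\Theta}\mat{\Theta}^{-1}\mat{\Theta}-\mat{\Xi}^{-1}=\mat{\Theta}-\mat{\Theta}=0$, so the Gaussian factor $\ee^{-\frac{1}{2}\va{\xi}^\transpose\mat{C}\va{\xi}}$ disappears; finally $\va{\xi}=\mat{\Sigma}^{-1}\mat{\Theta}^{-1}\va{\zeta}=\mat{\Theta}\mat{\Theta}^{-1}\va{\zeta}=\va{\zeta}$. Note that $\mat{\Sigma}=\mat{\Xi}=\mat{\Theta}^{-1}$ automatically inherits a positive definite real part, so \cref{theorem_lct} indeed applies.

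With these substitutions the determinant prefactor of \cref{eq_lct_hg_thrm} becomes $\abs{\mat{\Theta}^{-1}}^{\frac{1}{4}}/(\abs{\mat{\Theta}^{-1}}^{\frac{1}{2}}\abs{\mat{\Theta}}^{\frac{1}{4}})$, which collapses to $1$ upon writing $\abs{\mat{\Theta}^{-1}}=\abs{\mat{\Theta}}^{-1}$, while the phase reduces to $\qty(1/\ii)^{\abs{\vb{\nu}}+n/2}=(-\ii)^{\abs{\vb{\nu}}+n/2}$. Hence the bare LCT equals $(-\ii)^{\abs{\vb{\nu}}+n/2}\HGd[\mat{\Theta}^{-1}]{\vb{\nu}}\qty(\va{\zeta})$. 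Multiplying by the normalizing factor $\ii^{n/2}$ from the definition of $\frft{\HG[\mat{\Theta}]{\vb{\nu}}}$ and using $\ii^{n/2}(-\ii)^{n/2}=\qty(\ii(-\ii))^{n/2}=1$ cancels the half-integer $n$-phase and yields the first equality, $\frft{\HG[\mat{\Theta}]{\vb{\nu}}}\qty(\va{\zeta})=(-\ii)^{\abs{\vb{\nu}}}\HGd[\mat{\Theta}^{-1}]{\vb{\nu}}\qty(\va{\zeta})$.

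The second equality follows by applying \cref{basic_property_dual} at anisotropy $\mat{\Theta}^{-1}$, which gives $\HGd[\mat{\Theta}^{-1}]{\vb{\nu}}\qty(\va{\zeta})=\abs{\mat{\Theta}}^{\frac{1}{2}}\HG[\mat{\Theta}]{\vb{\nu}}\qty(\mat{\Theta}\va{\zeta})$ (using $\abs{\mat{\Theta}^{-1}}^{-\frac{1}{2}}=\abs{\mat{\Theta}}^{\frac{1}{2}}$ and $(\mat{\Theta}^{-1})^{-1}=\mat{\Theta}$). For \cref{fft_eigenfunctions} I would set $\mat{\Theta}=\mat{I}$ in this closed form: since $\abs{\mat{I}}=1$ and $\mat{I}\va{\zeta}=\va{\zeta}$, the transform fixes $\HG[\mat{I}]{\vb{\nu}}$ up to the scalar $(-\ii)^{\abs{\vb{\nu}}}$, exhibiting it as an eigenfunction with that eigenvalue. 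For \cref{fft_order_real_imaginary}, real $\mat{\Theta}$ and real $\va{\zeta}$ make $\abs{\mat{\Theta}}^{\frac{1}{2}}\HG[\mat{\Theta}]{\vb{\nu}}\qty(\mat{\Theta}\va{\zeta})$ real by \cref{basic_property_real}, so the character of the transform is dictated by $(-\ii)^{\abs{\vb{\nu}}}$, which is real when $\abs{\vb{\nu}}$ is even and purely imaginary when $\abs{\vb{\nu}}$ is odd.

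Because the entire argument is a substitution into \cref{theorem_lct}, there is no substantive obstacle. The only place demanding care is the bookkeeping of the quarter- and half-powers of the determinants together with the half-integer powers of $\ii$, so that the $n$-dependent contributions cancel exactly against the $\ii^{n/2}$ built into the definition of $\frft{\HG[\mat{\Theta}]{\vb{\nu}}}$; I would verify this cancellation explicitly to confirm that no residual dimension-dependent phase survives.
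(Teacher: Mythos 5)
Your proposal is correct and matches the paper's intended argument: the paper omits an explicit proof, stating only that \cref{cor_ft} follows from \cref{theorem_lct,lemma_eigenfunctions_lct} and the basic properties, and your substitution $a=d=0$, $b=1$ into \cref{LCT_shorthands}, the determinant/phase bookkeeping, and the appeal to \cref{basic_property_dual,basic_property_real} is precisely that specialization. The only (immaterial) difference is that you obtain the eigenfunction statement \cref{fft_eigenfunctions} by setting $\mat{\Theta}=\mat{I}$ directly in the closed form rather than invoking \cref{lemma_eigenfunctions_lct}, which is an equally valid and arguably more direct route.
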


\begin{corollary}[Fractional Fourier Transform] \label{cor_frft}
	Similarly, the LCT also generalizes the fractional Fourier transform (FrFT) of degree $\gamma$ via the
parameter matrix $\mat{A}_\text{FrFT}(\gamma)=\qty[\begin{smallmatrix}\cos\gamma & \sin\gamma \\ -\sin\gamma & \cos\gamma\end{smallmatrix}]$ by
		$ \frft*[\gamma]{\HG[\mat{\Theta}]{\vb{\nu}}}
			\triangleq
				\ee^{\ii \frac{n}{2}\gamma}
				\lct*[\mat{A}_\text{FrFT}(\gamma)]{\HG[\mat{\Theta}]{\vb{\nu}}}
		$.
	The eigenfunctions of the Fractional Fourier transform are $\HG[\mat{I}]{\vb{\nu}}$, with corresponding eigenvalues $e^{-\ii\gamma\abs{\vb{\nu}}}$
\end{corollary}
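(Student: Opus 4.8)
The plan is to treat the reduction to the FrFT as definitional — the matrix $\mat{A}_\text{FrFT}(\gamma)$ and the normalizing factor $\ee^{\ii\frac{n}{2}\gamma}$ are fixed by the statement — and to concentrate on the eigenfunction claim, which I would obtain as a direct specialization of \cref{lemma_eigenfunctions_lct}. First I would read off the entries $a=d=\cos\gamma$ and $b=\sin\gamma$ from $\mat{A}_\text{FrFT}(\gamma)$, note $\abs{\mat{A}_\text{FrFT}(\gamma)}=1$, and observe that the hypotheses $a=d$ and $a^2\neq 1$ of \cref{lemma_eigenfunctions_lct} hold precisely when $\gamma\notin\mpi\Integer$.

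The crux is to show that $\mat{\Theta}=\mat{I}$ is the self-reproducing anisotropy, i.e. that the lemma's constant $\beta$ equals $1$ for these entries. I would compute $(a^2-1)b^2=-\sin^4\gamma$ and, fixing the branch $\sqrt{(a^2-1)b^2}=\ii\sin^2\gamma$, obtain $\beta=\ii\frac{\sin^2\gamma}{\ii\sin^2\gamma}=1$, so that $\beta\mat{I}=\mat{I}=\mat{\Theta}$ and the lemma applies to $\HG[\mat{I}]{\vb{\nu}}$. With the same branch the companion constant simplifies to $\alpha=\ii\frac{\sin^2\gamma}{\ii\sin^2\gamma+\cos\gamma\sin\gamma}=\ii\sin\gamma\,\ee^{-\ii\gamma}$, so that the single ratio controlling everything is $\alpha/(\ii b)=\ee^{-\ii\gamma}$.

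It then remains to assemble the eigenvalue. Substituting $\beta=1$ into \cref{lemma_eigenfunctions_lct}, grouping the factor $\qty(\frac{1}{\ii b})^{\abs{\vb{\nu}}+\frac{n}{2}}\alpha^{\abs{\vb{\nu}}}\sqrt{\alpha^n}$ into the single power $\qty(\frac{\alpha}{\ii b})^{\abs{\vb{\nu}}+\frac{n}{2}}$, and folding in the normalization from the definition of $\frft*[\gamma]{\cdot}$ gives
\begin{align}
    \frft*[\gamma]{\HG[\mat{I}]{\vb{\nu}}}
        = \ee^{\ii\frac{n}{2}\gamma}
          \qty(\frac{\alpha}{\ii b})^{\abs{\vb{\nu}}+\frac{n}{2}}
          \HG[\mat{I}]{\vb{\nu}}
        = \ee^{\ii\frac{n}{2}\gamma}\,\ee^{-\ii\gamma\qty(\abs{\vb{\nu}}+\frac{n}{2})}\,\HG[\mat{I}]{\vb{\nu}}
        = \ee^{-\ii\gamma\abs{\vb{\nu}}}\,\HG[\mat{I}]{\vb{\nu}}
    ~,
\end{align}
where the $n$-dependent phases cancel and we are left with the claimed eigenvalue $\ee^{-\ii\gamma\abs{\vb{\nu}}}$.

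The main obstacle I anticipate is the bookkeeping of branch cuts: the computation passes through the three roots $\sqrt{(a^2-1)b^2}$, $\sqrt{\alpha^n}$ and $(\ii b)^{n/2}$, and I must choose them consistently so that $\beta=1$ (rather than another root) and so that $\sqrt{\alpha^n}/(\ii b)^{n/2}$ genuinely equals $\qty(\alpha/(\ii b))^{n/2}=\ee^{-\ii n\gamma/2}$. A second, minor gap is the degenerate angles $\gamma\in\mpi\Integer$, where $b=\sin\gamma=0$ and both the LCT integral \eqref{lct} and the hypotheses of \cref{lemma_eigenfunctions_lct} break down; I would dispose of these either by continuity in $\gamma$ or directly, noting that $\gamma=0$ is the identity and $\gamma=\mpi$ is the parity operator, for which \cref{basic_property_analytic_a} already supplies the eigenvalue $\ee^{-\ii\gamma\abs{\vb{\nu}}}$ up to the global phase built into the definition of $\frft*[\gamma]{\cdot}$.
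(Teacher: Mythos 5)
Your proposal is correct and matches the paper's intended argument: the paper gives no separate proof of \cref{cor_frft}, stating only that it follows from \cref{theorem_lct,lemma_eigenfunctions_lct}, and your computation ($a=d=\cos\gamma$, $b=\sin\gamma$, branch $\sqrt{(a^2-1)b^2}=\ii\sin^2\gamma$ giving $\beta=1$, $\alpha=\ii\sin\gamma\,\ee^{-\ii\gamma}$, hence eigenvalue $\ee^{\ii\frac{n}{2}\gamma}\,\ee^{-\ii\gamma(\abs{\vb{\nu}}+\frac{n}{2})}=\ee^{-\ii\gamma\abs{\vb{\nu}}}$) is exactly that specialization of \cref{lemma_eigenfunctions_lct}. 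Your treatment of the degenerate angles $\gamma\in\mpi\Integer$ via \cref{basic_property_analytic_a} is a sensible addition that the paper leaves implicit.
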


The fact that the univariate HG functions serve as the eigenfunctions of the (fractional) Fourier transform is well-known.
The corollaries above generalize these results to the multivariate (fractional) Fourier transform and AHG functions.
This has interesting Fourier optics interpretations:
AHG beams remain AHG beams under far-field diffraction (\cref{fft_eigenfunctions}).
Furthermore, only diffracted even-order AHG modes propagate to the far-field while odd-order modes diffract as evanescent waves (consequence of \cref{fft_order_real_imaginary}).

We omit the proof of the following corollary.

\begin{corollary}[Laplace Transform] \label{cor_laplace}
	The (two-sided) Laplace transform is a special case of the LCT with $\mat{A}_\text{L}=\qty[\begin{smallmatrix}0 & \ii \\ \ii & 0\end{smallmatrix}]$, viz.
	\begin{align}
		\lt{\HG[\mat{\Theta}]{\vb{\nu}}}\qty(\va{\zeta})
			\triangleq
				(-2\mpi)^{\frac{n}{2}} \lct[\mat{A}_\text{L}]{\HG[\mat{\Theta}]{\vb{\nu}}}
			&=
				\qty(2\mpi)^{\frac{n}{2}}
				\ii^{\abs{\vb{\nu}}}
				\abs{\mat{\Theta}}^{\half}
				\HG[\mat{\Theta}]{\vb{\nu}}\qty(\ii\mat{\Theta}\va{\zeta})
		.
	\end{align}
	The eigenfunctions of the Laplace transform are
	$
			\HG[\mat{I}]{\vb{\nu}}(\frac{1+\ii}{\sqrt{2}}\va{\zeta})
	$, with corresponding eigenvalues $
		\qty(2\mpi)^{\frac{n}{2}}
		\ii^{\abs{\vb{\nu}}}
		\sqrt{(-\ii)^n}
	$.
\end{corollary}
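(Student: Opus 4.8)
The plan is to derive both claims of \cref{cor_laplace} as direct specializations of \cref{theorem_lct}. First I would check $\abs{\mat{A}_\text{L}}=1$ (indeed $0\cdot 0 - \ii\cdot\ii = 1$) and substitute $a=d=0$, $b=c=\ii$ into the shorthands \cref{LCT_shorthands}. Because $a=0$ forces $\mat{\Sigma}=\mat{\Theta}^{-1}$, one gets $\mat{\Theta}\mat{\Sigma}\mat{\Theta}=\mat{\Theta}$ and hence $\mat{\Xi}=b^2\qty(2\mat{\Theta}^{-1}-\mat{\Theta}^{-1})=-\mat{\Theta}^{-1}$, while $\va{\xi}=\mat{\Sigma}^{-1}\mat{\Theta}^{-1}\va{\zeta}=\va{\zeta}$. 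With $d=0$ and $b^{-1}=-\ii$ the quadratic-form matrix collapses, $\mat{C}=-\mat{\Theta}-\mat{\Xi}^{-1}=-\mat{\Theta}+\mat{\Theta}=\mat{0}$, so the factor $\ee^{-\frac{1}{2}\va{\xi}^\transpose\mat{C}\va{\xi}}$ is trivial. I would also note that $\Re\mat{\Sigma}\succ 0$ holds whenever $\mat{\Theta}$ is real positive definite, which is all the convergence of the Gaussian integral in the proof of \cref{theorem_lct} actually needs; the auxiliary condition $\Re\mat{\Xi}\succ 0$ fails here, but that only governs whether $\HGd[\mat{\Xi}]{\vb{\nu}}$ is a bona fide $L^2$ basis element, not the validity of the closed-form identity, which is an analytic identity in the matrix entries and extends by continuation.

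Next I would convert the dual AHG on the right-hand side of \cref{eq_lct_hg_thrm} into a primal AHG in $\mat{\Theta}$. Applying \cref{basic_property_dual} to $\mat{\Xi}=-\mat{\Theta}^{-1}$ gives $\HGd[\mat{\Xi}]{\vb{\nu}}\qty(\va{\zeta})=\abs{\mat{\Xi}}^{-\half}\HG[-\mat{\Theta}]{\vb{\nu}}\qty(-\mat{\Theta}\va{\zeta})$, and \cref{basic_property_analytic_b} with $z=\ii$ (so that $z^2\mat{\Theta}=-\mat{\Theta}$ and $z^{-1}\qty(-\mat{\Theta}\va{\zeta})=\ii\mat{\Theta}\va{\zeta}$) rewrites this as a scalar multiple of $\HG[\mat{\Theta}]{\vb{\nu}}\qty(\ii\mat{\Theta}\va{\zeta})$. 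Collecting the $\abs{\vb{\nu}}$-dependent factors, the $\qty(\ii b)^{-\abs{\vb{\nu}}}=(-1)^{\abs{\vb{\nu}}}$ from the prefactor times the $(-\ii)^{\abs{\vb{\nu}}}$ produced by \cref{basic_property_analytic_b} combine cleanly to $\ii^{\abs{\vb{\nu}}}$, the determinant powers reduce to $\abs{\mat{\Theta}}^{\half}$, and finally fusing the normalization $(-2\mpi)^{\frac{n}{2}}$ from the definition of $\lt$ with the prefactor $\qty(\ii b)^{-\frac{n}{2}}$ gives $\qty(\frac{-2\mpi}{\ii b})^{\frac{n}{2}}=\qty(2\mpi)^{\frac{n}{2}}$. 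This yields $\lt{\HG[\mat{\Theta}]{\vb{\nu}}}\qty(\va{\zeta})=\qty(2\mpi)^{\frac{n}{2}}\ii^{\abs{\vb{\nu}}}\abs{\mat{\Theta}}^{\half}\HG[\mat{\Theta}]{\vb{\nu}}\qty(\ii\mat{\Theta}\va{\zeta})$.

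For the eigenfunctions I would evaluate the formula just obtained at the boundary anisotropy $\mat{\Theta}=-\ii\mat{I}$, for which $\ii\mat{\Theta}=\mat{I}$, so the argument is unscaled and $\lt{\HG[-\ii\mat{I}]{\vb{\nu}}}=\qty(2\mpi)^{\frac{n}{2}}\ii^{\abs{\vb{\nu}}}\abs{-\ii\mat{I}}^{\half}\HG[-\ii\mat{I}]{\vb{\nu}}$, exhibiting $\HG[-\ii\mat{I}]{\vb{\nu}}$ as an eigenfunction with eigenvalue $\qty(2\mpi)^{\frac{n}{2}}\ii^{\abs{\vb{\nu}}}\sqrt{(-\ii)^n}$ since $\abs{-\ii\mat{I}}^{\half}=\sqrt{(-\ii)^n}$. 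A last application of \cref{basic_property_analytic_b} with $z^2=-\ii$, i.e. $z^{-1}=\frac{1+\ii}{\sqrt{2}}$ (as $\qty(\frac{1+\ii}{\sqrt{2}})^2=\ii$), identifies $\HG[-\ii\mat{I}]{\vb{\nu}}\qty(\va{\zeta})$ with a scalar multiple of $\HG[\mat{I}]{\vb{\nu}}\qty(\frac{1+\ii}{\sqrt{2}}\va{\zeta})$; since eigenvalues are scale-invariant this is the stated eigenfunction with the stated eigenvalue.

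The \emph{main obstacle} is not the matrix algebra but the consistent bookkeeping of the multivalued fractional powers: $\abs{\mat{\Xi}}^{\frac{1}{4}}$, $\abs{-\mat{\Theta}}^{-\frac{1}{4}}$, $\abs{\mat{\Xi}}^{-\half}$, $\qty(\ii b)^{-\frac{n}{2}}$, $(-2\mpi)^{\frac{n}{2}}$ and $\sqrt{(-\ii)^n}$ each carry a factor of the form $(-1)^{cn}$ whose branch must be fixed, and a naive exponent count leaves an ambiguous residual $(-1)^{\pm n/2}$. One must therefore commit to a single branch convention for the fourth and square roots of the complex determinants (and fuse $(-2\mpi)^{\frac{n}{2}}$ with $\qty(\ii b)^{-\frac{n}{2}}$ under one power, as above) so that these factors cancel to the clean constants claimed. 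A secondary point to address explicitly is that both the Laplace anisotropy $\mat{\Xi}=-\mat{\Theta}^{-1}$ and the eigenfunction anisotropy $-\ii\mat{I}$ lie on or outside the region $\Re\mat{\Theta}\succ 0$; the identities therefore hold by analytic continuation and the eigenfunctions are formal --- indeed $\HG[\mat{I}]{\vb{\nu}}\qty(\frac{1+\ii}{\sqrt{2}}\va{\zeta})$ carries a non-decaying Gaussian factor and is not square-integrable, exactly as expected for the non-unitary Laplace transform.
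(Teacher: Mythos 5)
Your proof is correct and follows exactly the route the paper intends (the paper omits the proof, stating only that the corollary follows from \cref{theorem_lct}, \cref{lemma_eigenfunctions_lct} and the basic properties): you specialize \cref{theorem_lct} to $a=d=0$, $b=c=\ii$, obtaining $\mat{\Sigma}=\mat{\Theta}^{-1}$, $\mat{\Xi}=-\mat{\Theta}^{-1}$, $\mat{C}=0$, $\va{\xi}=\va{\zeta}$, convert the dual AHG function via \cref{basic_property_dual,basic_property_analytic_b}, and your eigenfunction argument at $\mat{\Theta}=-\ii\mat{I}$ reproduces precisely what \cref{lemma_eigenfunctions_lct} yields with $\alpha=\beta=-\ii$. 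Your treatment of the two genuinely delicate points --- fixing a single branch convention by fusing the fractional powers (e.g. $\abs{\mat{\Xi}}^{-1/4}\abs{-\mat{\Theta}}^{-1/4}=\qty(\abs{-\mat{\Theta}^{-1}}\abs{-\mat{\Theta}})^{-1/4}=1$ and $\qty(-2\mpi/(\ii b))^{n/2}=(2\mpi)^{n/2}$), and the failure of $\Re\mat{\Xi}\succ 0$ handled by analytic continuation since only $\Re\mat{\Sigma}\succ 0$ is needed for convergence of the Gaussian integral --- is also sound.
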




\section{Wigner-Vile Distribution} \label{section_wvd}

The \emph{Wigner-Vile Distribution} (WVD) is an integral transform that commonly arise in optics and quantum mechanics, useful for processing linear frequency-modulated signals.
The WVD of a $\Real^n\to\Complex$ $L^2$-function $f$ is defined as the following Fourier transform:
\begin{align}
	\wvd{f}\qty(\va{r},\va{\zeta})
		& \triangleq
			\frft{f\qty(\va{r}-\frac{1}{2}\va{\xi}) f^\star\qty(\va{r}+\frac{1}{2}\va{\xi})}\qty(\va{\zeta})
	,
\end{align}
where the FT is taken with respect to the integration variable $\va{\xi}$.

\begin{lemma} \label{wvd_lemma_fft}
	Let $\va{r},\va{\zeta}\in\Real^n$, $\mat{\Theta}\in\Real^{n\times n}$ s.t. $\mat{\Theta}\succ 0$. Then
	\begin{align}
		&\frft{
			\HG[\mat{\Theta}]{\vb{\nu}}\qty(\va{r}-\frac{1}{2}\va{\xi})
			\HG[\mat{\Theta}]{\vb{\mu}}\qty(\va{r}+\frac{1}{2}\va{\xi})
		}\qty(\va{\zeta})
			=
			\qty(4^n\mpi^n \abs{\mat{\Theta}}^3)^{\frac{1}{4}}
			\ee^{-\frac{1}{2}\va{\zeta}^\transpose\mat{\Theta}\va{\zeta}}
			\sum_{\substack{
				\vb{\tau}\preceq\vb{\nu},\\
				\vb{\sigma}\preceq\vb{\mu}
			}}
				\qty(-1)^{\abs{\vb{\nu}-\vb{\tau}}}
        \nonumber
		\\
		&\;\qquad\times
				\ii^{\abs{\vb{\nu}+\vb{\mu}-\vb{\tau}-\vb{\sigma}}}
				\sqrt{
					\binom{\vb{\nu}}{\vb{\tau}}
					\binom{\vb{\mu}}{\vb{\sigma}}
					\binom{\vb{\nu}+\vb{\mu}-\vb{\tau}-\vb{\sigma}}{\vb{\nu}-\vb{\tau}}
				}
				\HG[\mat{\Theta}]{\vb{\tau}}\qty(\va{r})
				\HG[\mat{\Theta}]{\vb{\sigma}}\qty(\va{r})
				\HGd[\mat{\Theta}^{-1}]{\vb{\mu}+\vb{\nu}-\vb{\sigma}-\vb{\tau}}\qty(\va{\zeta})
        \nonumber
		~.
	\end{align}
\end{lemma}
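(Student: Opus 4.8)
The plan is to prove the identity in the bilinear generating-function form used throughout the paper and then equate coefficients of $\va{x}^{\vb{\nu}}\va{y}^{\vb{\mu}}$, as in the proofs of \cref{lemma_offseted,lemma_product_of_hg}. By linearity of the Fourier transform I would apply $\frft$ (with respect to $\va{\xi}$) to the product of two copies of the AHG generating function \cref{HG_generating_function}, one evaluated at $\va{r}-\frac{1}{2}\va{\xi}$ carrying the formal variable $\va{x}$, the other at $\va{r}+\frac{1}{2}\va{\xi}$ carrying $\va{y}$. Multiplying the two Gaussian prefactors, the $\va{r}\va{\xi}$ cross terms cancel, leaving a quadratic $-\frac{1}{4}\va{\xi}^\transpose\mat{\Theta}^{-1}\va{\xi}$ and a linear term proportional to $\qty(\va{y}-\va{x})^\transpose\mat{\Theta}^{-1}\va{\xi}$. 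The Fourier transform then collapses to a single multidimensional Gaussian integral with a linear term, whose closed form is standard and whose convergence is guaranteed by $\mat{\Theta}\succ 0$ (so that $\mat{\Theta}^{-1}\succ 0$).

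Having evaluated that integral, I would reorganize the resulting quadratic form in $\qty(\va{r},\va{\zeta},\va{x},\va{y})$ and recognize it as the product of three generating functions: two copies of \cref{HG_generating_function} based at the point $\va{r}$ (in the variables $\va{x}$ and $\va{y}$ respectively), times one copy of the \emph{dual} generating function \cref{HG_dual_generating_function} taken with matrix $\mat{\Theta}^{-1}$ and based at $\va{\zeta}$, whose generating variable is proportional to $\va{y}-\va{x}$. This is exactly what produces the three factors $\HG[\mat{\Theta}]{\vb{\tau}}\qty(\va{r})$, $\HG[\mat{\Theta}]{\vb{\sigma}}\qty(\va{r})$ and $\HGd[\mat{\Theta}^{-1}]{}\qty(\va{\zeta})$ appearing on the right-hand side, while the explicit factor $\ee^{-\frac{1}{2}\va{\zeta}^\transpose\mat{\Theta}\va{\zeta}}$ is what remains once the dual generating function's own Gaussian has been factored out.

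The final step is combinatorial. Expanding the shared variable $\va{y}-\va{x}$ of the dual factor by the multinomial theorem turns the single sum over the dual degree into a double sum over $\vb{\tau}\preceq\vb{\nu}$ and $\vb{\sigma}\preceq\vb{\mu}$, and the phases $(-1)^{\abs{\vb{\nu}-\vb{\tau}}}$ and $\ii^{\abs{\vb{\nu}+\vb{\mu}-\vb{\tau}-\vb{\sigma}}}$ arise from the signs and the factor $\ii$ carried by that variable. Applying to each of the three generating functions the elementary splitting $\sqrt{2^{\abs{\vb{\nu}}}/\vb{\nu}!}\,\binom{\vb{\nu}}{\vb{\tau}}^{\half}=\sqrt{2^{\abs{\vb{\tau}}}/\vb{\tau}!}\,\sqrt{2^{\abs{\vb{\nu}-\vb{\tau}}}/\qty(\vb{\nu}-\vb{\tau})!}$ then reorganizes the weights into the two binomials $\binom{\vb{\nu}}{\vb{\tau}}$ and $\binom{\vb{\mu}}{\vb{\sigma}}$, and the multinomial coefficient supplies the third, $\binom{\vb{\nu}+\vb{\mu}-\vb{\tau}-\vb{\sigma}}{\vb{\nu}-\vb{\tau}}$, all under a single square root. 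Equating the coefficients of $\va{x}^{\vb{\nu}}\va{y}^{\vb{\mu}}$ yields the stated formula. I expect the analytic content — the one Gaussian integral — to be routine, and the main obstacle to be the careful bookkeeping in this last re-summation: keeping the three binomial factors, the $\ii$ and sign phases, and the powers of $\abs{\mat{\Theta}}$ and $\mpi$ all mutually consistent.
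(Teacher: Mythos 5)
Your proposal follows essentially the same route as the paper's own proof: Fourier-transform the product of two copies of the generating function \cref{HG_generating_function}, evaluate the resulting Gaussian integral with a linear term (convergent since $\mat{\Theta}\succ 0$), recognize the result as two generating functions based at $\va{r}$ times the dual generating function \cref{HG_dual_generating_function} with matrix $\mat{\Theta}^{-1}$ in the variable $\va{y}-\va{x}$, and finish by multinomial expansion and equating coefficients of $\va{x}^{\vb{\nu}}\va{y}^{\vb{\mu}}$. The plan is correct, and your explicit splitting of the normalization factors into binomial coefficients is just the bookkeeping the paper leaves implicit in its final ``equate the powers'' step.
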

\begin{proof}
	Take the FT of the generating functions:
	\begin{align}
		\sum_{\vb{\nu},\vb{\mu}\in\Natural^n}
			\sqrt{\frac{2^{\abs{\vb{\nu}}+\abs{\vb{\mu}}}}{\vb{\nu}!\vb{\mu}!}}
			&\va{x}^{\vb{\nu}}
			\va{y}^{\vb{\mu}}
			\frft{
				\HG[\mat{\Theta}]{\vb{\nu}}\qty(\va{r}-\frac{1}{2}\va{\xi})
				\HG[\mat{\Theta}]{\vb{\mu}}\qty(\va{r}+\frac{1}{2}\va{\xi})
			}\qty(\va{\zeta})
        \nonumber
		\\
		=&
			\frac{1
			}{\sqrt{\mpi^n\abs{\mat{\Theta}}}}
			\frft
            \Big\{
				\ee^{-\frac{1}{2}\qty(\va{r}-\frac{1}{2}\va{\xi})^\transpose\mat{\Theta}^{-1}\qty(\va{r}-\frac{1}{2}\va{\xi}) + \va{x}^\transpose\mat{\Theta}^{-1}\qty(2\va{r}-\va{\xi}-\va{x})}
        \nonumber \\
        &\qquad\qquad\qquad\qquad\times
				\ee^{-\frac{1}{2}\qty(\va{r}+\frac{1}{2}\va{\xi})^\transpose\mat{\Theta}^{-1}\qty(\va{r}+\frac{1}{2}\va{\xi}) + \va{y}^\transpose\mat{\Theta}^{-1}\qty(2\va{r}+\va{\xi}-\va{y})}
			\Big\}\qty(\va{\zeta})
		\\
		=&
			\frac{1}
				 {\sqrt{\mpi^n\abs{\mat{\Theta}}}}
            \ee^{-\va{r}^\transpose\mat{\Theta}^{-1}\va{r}}
			\ee^{\va{x}^\transpose\mat{\Theta}^{-1}\qty(2\va{r}-\va{x})}
			\ee^{\va{y}^\transpose\mat{\Theta}^{-1}\qty(2\va{r}-\va{y})}
        \nonumber 
        \\
        &\qquad\qquad\qquad\qquad\qquad\times
			\frft{
				\ee^{-\frac{1}{2}\va{\xi}^\transpose\qty(2\mat{\Theta})^{-1}\va{\xi}}
				\ee^{2\va{\xi}^\transpose\qty(2\mat{\Theta})^{-1}\qty(\va{y}-\va{x})}
			}\qty(\va{\zeta})
		\label{eq_lemma_wvd_fft_proof_step1}
		~,
	\end{align}
	complete the square and integrate (in similar fashion to \cref{eq_gaussian_integral_thm_lct_proof}):
	\begin{align}
		\frft{
				\ee^{\va{\xi}^\transpose\qty(2\mat{\Theta})^{-1}\qty[-\frac{1}{2}\va{\xi} + 2\qty(\va{y}-\va{x})]}
			}\qty(\va{\zeta})
		&=
            \qty(\frac{1}{2\mpi})^{\frac{n}{2}}
		    \int \dd{\va{\xi}}
				\ee^{-\frac{1}{2} \va{\xi}^\transpose\qty(2\mat{\Theta})^{-1}\va{\xi} + \va{\xi}^\transpose\va{\zeta}^\prime}
        \\
		&=
				\sqrt{2^n\abs{\mat{\Theta}}}
				\ee^{\qty(\va{\zeta}^\prime)^\transpose\mat{\Theta}\va{\zeta}^\prime}
		\label{eq_lemma_wvd_fft_proof_step2}
	\end{align}
    where we set $\va{\zeta}^\prime=\mat{\Theta}^{-1}\qty(\va{y}-\va{x}) - \ii\va{\zeta}$. 
    The FT always convergences as $\mat{\Theta}^{-1}\succ 0$.
	Then, putting \cref{eq_lemma_wvd_fft_proof_step1,eq_lemma_wvd_fft_proof_step2} together and rewriting the result as the generating functions of
	$\HG[\mat{\Theta}]{\vb{\tau}}(\va{r})$,
	$\HG[\mat{\Theta}]{\vb{\sigma}}(\va{r})$,
	$\HGd[\mat{\Theta}^{-1}]{\vb{\alpha}}(\va{\zeta})$ with variables $\va{x}$, $\va{y}$ and $\va{x}-\va{y}$, respectively, gives
	\begin{align}
		\sum_{\vb{\nu},\vb{\mu}\in\Natural^n}
			\sqrt{\frac{2^{\abs{\vb{\nu}}+\abs{\vb{\mu}}}}{\vb{\nu}!\vb{\mu}!}}
			&\va{x}^{\vb{\nu}}
			\va{y}^{\vb{\mu}}
			\frft{
				\HG[\mat{\Theta}]{\vb{\nu}}\qty(\va{r}-\frac{1}{2}\va{\xi})
				\HG[\mat{\Theta}]{\vb{\mu}}\qty(\va{r}+\frac{1}{2}\va{\xi})
			}\qty(\va{\zeta})
        \nonumber
		\\
		&=
			\qty(\frac{2}{\mpi})^{\frac{n}{2}}
			\ee^{-\va{\zeta}^\transpose\mat{\Theta}\va{\zeta}}
			\ee^{2\ii\qty(\va{x}-\va{y})^\transpose\va{\zeta}}
			\ee^{\qty(\va{x}-\va{y})^\transpose\mat{\Theta}^{-1}\qty(\va{x}-\va{y})}
        \nonumber \\
        &\quad\qquad\qquad\times
			\ee^{-\frac{1}{2}\va{r}^\transpose\mat{\Theta}^{-1}\va{r} +
				 \va{x}^\transpose\mat{\Theta}^{-1}\qty(2\va{r}-\va{x})}
			\ee^{-\frac{1}{2}\va{r}^\transpose\mat{\Theta}^{-1}\va{r} +
				 \va{y}^\transpose\mat{\Theta}^{-1}\qty(2\va{r}-\va{y})}
		\\
		&=
			\qty(4^n\mpi^n \abs{\mat{\Theta}})^{\frac{1}{4}}
			\ee^{-\frac{1}{2}\va{\zeta}^\transpose\mat{\Theta}\va{\zeta}}
			\sum_{\vb{\alpha}\in\Natural^n}
				\sqrt{\frac{2^{\abs{\vb{\alpha}}}}{\vb{\alpha}!}}
				\qty(\va{x}-\va{y})^{\vb{\alpha}}
				\ii^{\abs{\vb{\alpha}}}
				\HGd[\mat{\Theta}^{-1}]{\vb{\alpha}}\qty(\va{\zeta})
        \nonumber \\
        &\qquad\qquad\qquad\qquad\times
			\sum_{\vb{\tau},\vb{\sigma}\in\Natural^n}
				\sqrt{\frac{2^{\abs{\vb{\tau}}+\abs{\vb{\sigma}}}}{\vb{\tau}!\vb{\sigma}!}}
				\va{x}^{\vb{\tau}}
				\va{y}^{\vb{\sigma}}
				\HG[\mat{\Theta}]{\vb{\tau}}\qty(\va{r})
				\HG[\mat{\Theta}]{\vb{\sigma}}\qty(\va{r})
		~.
	\end{align}
	Finally, apply the multinomial theorem, viz.:
	\begin{align}
		\qty(\va{x}-\va{y})^{\vb{\alpha}}
		&=
			\sum_{\substack{
				\vb{\beta}\preceq\vb{\alpha}
			}}
				\binom{\vb{\alpha}}{\vb{\beta}}
				\qty(-1)^{\abs{\vb{\beta}}}
				\va{x}^{\vb{\alpha} - \vb{\beta}}
				\va{y}^{\vb{\beta}}
	\end{align}
	and equate the powers of $\va{x}$ and $\va{y}$, yielding the lemma.
\end{proof}

As any arbitrary $L^2$-function can be expanded in AHG space (\cref{lemma_orthogonality}), by using \cref{wvd_lemma_fft} we can write an expression for the WVD of that function.
In practice, this allows direct computation of the WVD for functions that can be expressed as a superposition of a limited number of AHG functions (e.g., AHG beams).
\begin{theorem}[WVD in AHG space] \label{wvd_theorem}
	Let $\mat{\Theta}\in\Real^{n\times n}$, with $\Re\mat{\Theta}\succ 0$,
	and $f(\va{r})=\sum_{\vb{\nu}} a_{\vb{\nu}} \HG[\mat{\Theta}]{\vb{\nu}}(\va{r})$ be an $\Real^n\to\Complex$ $L^2$-functions expressed via its AHG-basis coefficients, viz. $a_{\vb{\nu}} = \inp*{f}{\HGd[\mat{\Theta}]{\vb{\nu}}}$.
	Then,
	\begin{align}
		&\wvd{f}\qty(\va{r},\va{\zeta})
			=
			\qty(4^n\mpi^n \abs{\mat{\Theta}})^{\frac{1}{4}}
			\ee^{-\frac{1}{2}\va{\zeta}^\transpose\mat{\Theta}\va{\zeta}}
			\sum_{\vb{\nu},\vb{\mu}\in\Natural^n}
				a_{\vb{\nu}}
				a_{\vb{\mu}}^\star
			\sum_{\substack{
				\vb{\tau}\preceq\vb{\nu},\\
				\vb{\sigma}\preceq\vb{\mu}
			}}
				\qty(-1)^{\abs{\vb{\nu}-\vb{\tau}}}
		\nonumber \\
		&\qquad\times
				\ii^{\abs{\vb{\nu}+\vb{\mu}-\vb{\tau}-\vb{\sigma}}}
				\sqrt{
					\binom{\vb{\nu}}{\vb{\tau}}
					\binom{\vb{\mu}}{\vb{\sigma}}
					\binom{\vb{\nu}+\vb{\mu}-\vb{\tau}-\vb{\sigma}}{\vb{\nu}-\vb{\tau}}
				}
				\HG[\mat{\Theta}]{\vb{\tau}}\qty(\va{r})
				\HG[\mat{\Theta}]{\vb{\sigma}}\qty(\va{r})
				\HGd[\mat{\Theta}^{-1}]{\vb{\mu}+\vb{\nu}-\vb{\sigma}-\vb{\tau}}\qty(\va{\zeta})
        \nonumber
		~.
	\end{align}
\end{theorem}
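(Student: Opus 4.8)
The plan is to substitute the AHG expansion of $f$ directly into the definition of the Wigner-Vile distribution and to reduce the whole computation to a termwise application of \cref{wvd_lemma_fft}. First I would expand the two factors appearing inside the WVD. The expansion $f\qty(\va{r}-\frac{1}{2}\va{\xi})=\sum_{\vb{\nu}} a_{\vb{\nu}}\HG[\mat{\Theta}]{\vb{\nu}}\qty(\va{r}-\frac{1}{2}\va{\xi})$ is immediate. For the conjugated factor I would use that $\mat{\Theta}$ is real and that the argument $\va{r}+\frac{1}{2}\va{\xi}$ is real, since $\va{r}$ is real and the integration variable $\va{\xi}$ ranges over $\Real^n$; by \cref{basic_property_real} each $\HG[\mat{\Theta}]{\vb{\mu}}$ is then real-valued there, so that conjugation acts only on the coefficients and $f^\star\qty(\va{r}+\frac{1}{2}\va{\xi})=\sum_{\vb{\mu}} a_{\vb{\mu}}^\star\HG[\mat{\Theta}]{\vb{\mu}}\qty(\va{r}+\frac{1}{2}\va{\xi})$. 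Multiplying the two series turns the WVD integrand into the double sum $\sum_{\vb{\nu},\vb{\mu}} a_{\vb{\nu}} a_{\vb{\mu}}^\star\,\HG[\mat{\Theta}]{\vb{\nu}}\qty(\va{r}-\frac{1}{2}\va{\xi})\HG[\mat{\Theta}]{\vb{\mu}}\qty(\va{r}+\frac{1}{2}\va{\xi})$.

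Next I would invoke the linearity of the Fourier transform, taken with respect to $\va{\xi}$, to pass the transform through the double sum, so that the $\qty(\vb{\nu},\vb{\mu})$-th summand becomes $a_{\vb{\nu}} a_{\vb{\mu}}^\star\,\frft{\HG[\mat{\Theta}]{\vb{\nu}}\qty(\va{r}-\frac{1}{2}\va{\xi})\HG[\mat{\Theta}]{\vb{\mu}}\qty(\va{r}+\frac{1}{2}\va{\xi})}\qty(\va{\zeta})$. Substituting the closed form of this transform supplied by \cref{wvd_lemma_fft}, and factoring out the common Gaussian prefactor $\ee^{-\frac{1}{2}\va{\zeta}^\transpose\mat{\Theta}\va{\zeta}}$ together with the scalar normalization (both independent of $\vb{\nu},\vb{\mu}$), reproduces the claimed expression term by term, with the inner sum over $\vb{\tau}\preceq\vb{\nu}$, $\vb{\sigma}\preceq\vb{\mu}$ inherited directly from the lemma.

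The main obstacle, and the only step requiring genuine care, is justifying the interchange of the (in general infinite) summation over $\vb{\nu},\vb{\mu}$ with the Fourier integral. By \cref{lemma_orthogonality} the series $f=\sum_{\vb{\nu}} a_{\vb{\nu}}\HG[\mat{\Theta}]{\vb{\nu}}$ converges in $L^2$, and for every fixed real $\va{r}$ the integrand $f\qty(\va{r}-\frac{1}{2}\va{\xi}) f^\star\qty(\va{r}+\frac{1}{2}\va{\xi})$ is an $L^1$ function of $\va{\xi}$ by the Cauchy-Schwarz inequality, so its Fourier transform is well defined. I would then establish the termwise transform by approximating $f$ with its finite partial sums: products of $L^2$-convergent sequences converge in $L^1$, and the Fourier transform is continuous from $L^1$ into $C_0$, so the finite double sums produced by \cref{wvd_lemma_fft} converge to $\wvd{f}$. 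In the applications of interest $f$ is a finite superposition of AHG modes, in which case the double sum is finite and the interchange is immediate.
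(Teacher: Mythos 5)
Your proposal is correct and follows essentially the same route as the paper: expand $f$ in its AHG series (with conjugation passing to the coefficients since the real-$\mat{\Theta}$, real-argument AHG functions are real-valued by \cref{basic_property_real}), push the Fourier transform through the double sum, and apply \cref{wvd_lemma_fft} termwise. The only difference is that you explicitly justify the interchange of summation and integration via $L^2$-to-$L^1$ convergence of products and the continuity of the Fourier transform on $L^1$, a point the paper's one-line proof leaves implicit.
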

\begin{proof}
	Write
	\begin{align}
		\wvd{f}\qty(\va{r},\va{\zeta})
		&=
			\frft{
				\sum_{\vb{\nu},\vb{\mu}\in\Natural^n}
					a_{\vb{\nu}} a_{\vb{\mu}}^\star
					\HG[\mat{\Theta}]{\vb{\nu}}\qty(\va{r}-\frac{1}{2}\va{\xi})
					\HG[\mat{\Theta}]{\vb{\mu}}\qty(\va{r}+\frac{1}{2}\va{\xi})
			}\qty(\va{\zeta})
	\end{align}
	and apply \cref{wvd_lemma_fft}.
\end{proof}


\bibliographystyle{apalike}
\bibliography{paper}





\end{document}